\newtheorem{theorem}{Theorem}
\newtheorem{lemma}{Lemma}
\newtheorem{fact}{Fact}
\newtheorem{case}{Case}
\newtheorem{problem}{Problem}
\newtheorem{example}{Example}
\newcommand{\Var}{\mbox{Var}}
\newcommand{\A}{\mathcal{A}}
\newcommand{\J}{\mathcal{C}}
\newcommand{\E}{\mathbb{E}}
\newcommand{\bydef}{\stackrel{\rm{def}}{=}}
\title{Control of parallel non-observable queues: asymptotic equivalence and optimality of periodic policies\footnote{Research partially supported by grant SA-2012/00331 of the Department of Industry, Innovation, Trade and Tourism (Basque Government) and grant MTM2010-17405 (Ministerio de Ciencia e Innovaci\'on, Spain) which sponsored an internship of T. Nesti at BCAM.}
}
\author{
%
Jonatha Anselmi\footnote{Basque Center for Applied Mathematics (BCAM), Al. de Mazarredo 14, 48009 Bilbao, Basque Country, Spain; INRIA Bordeaux Sud Ouest, 200 av. de la Vieille Tour, 33405 Talence,  France. Email: jonatha.anselmi@inria.fr},
~Bruno Gaujal\footnote{INRIA and LIG, Zirst 51, Av. J. Kuntzmann, MontBonnot Saint-Martin, 38330, France. Email: bruno.gaujal@inria.fr}
~and Tommaso Nesti\footnote{Basque Center for Applied Mathematics (BCAM), Al. de Mazarredo 14, 48009 Bilbao, Basque Country, Spain; Dipartimento di Matematica, University of Pisa, Largo B. Pontecorvo 5, 56127 Pisa, Italy. Email: nesti@mail.dm.unipi.it}
}
\begin{document}
\date{}
\maketitle

\begin{abstract}
We consider a queueing system composed of a dispatcher that routes jobs to a set of non-observable queues working in parallel.
In this setting, the fundamental problem is which policy  should the dispatcher implement to minimize the stationary mean waiting time of the incoming jobs.
We present a structural property that holds in the classic scaling of the system where the network demand (arrival rate of jobs) grows proportionally with the number of queues.
Assuming that each queue of type~$r$ is replicated~$k$ times,
we consider a set of policies that
are periodic with period $k \sum_r p_r$ and such that exactly $p_r$ jobs are sent in a period to each queue of type~$r$.
When $k\to\infty$, our main result shows that all the policies in this set are \emph{equivalent}, in the sense that they yield the same mean stationary waiting time, and \emph{optimal},
in the sense that no other policy having the same aggregate arrival rate to \emph{all} queues of a given type can do better in minimizing the stationary mean waiting time.
This property holds in a strong probabilistic sense.
Furthermore, the limiting mean waiting time achieved by our policies is a convex function of the arrival rate in each queue, which
facilitates the development of a further optimization aimed at solving the fundamental problem above for large systems.
\end{abstract}

\section{Introduction}

In computer and communication networks,
the access of jobs to resources (web servers, network links, etc.) is usually regulated by a dispatcher.
A fundamental problem is which algorithm should the dispatcher implement to minimize the mean delay experienced by jobs.
There is a vast literature on this subject
and
the structure of the optimal algorithm strongly depends on
\emph{i)} the information available to the dispatcher, 
\emph{ii)} the topology of the network and
\emph{iii)} how jobs are processed by resources.
We are interested in a scenario where:
\begin{itemize}
\item The dispatcher has \emph{static} information of the system;

\item The network topology is parallel;

\item Resources process jobs according to the first-come-first-served discipline.
\end{itemize}
Static information means that the dispatcher knows the probability distributions of job sizes and inter-arrival times 
but cannot observe the dynamic state of resources such as the current number of jobs in their queues.
This scenario can be of interest in the context of volunteer computing, cloud computing, web server farms, etc.; see, e.g., \cite{JavadiKVA11,JavadiTB12,HBSY09} respectively.

In this framework, the problems
of finding an algorithm, or \emph{policy}, that minimizes the mean stationary delay and
of determining the minimum mean stationary delay
are both considered difficult; see, e.g., \cite{Gau2000,Gau2000acm} for an overview.
A policy can be defined as a function that maps a natural number $n$, corresponding to the $n$-th job arriving to the dispatcher, to a probability mass function $P_n$ over the set of resources.
When the $n$-th job arrives, the dispatcher sends it to resource $i$ with probability $P_n(i)$.
Unfortunately, the problem of finding an optimal policy is intractable and for this reason two extreme families of policies have received particular attention in the literature:
\emph{probabilistic} policies, obtained when $P_n$ is constant (in $n$), and
\emph{deterministic} policies, obtained when $P_n$ puts the whole mass on a single resource.

When dealing with probabilistic policies, 
the difficulty of the problem is simplified by the fact that the arrival process at each resource is a renewal process, provided that the same holds for the arrival process at the dispatcher.
This allows one to decompose the problem and, using the theory of the mean waiting time of the single GI/GI/1 queue,
to immediately reduce it to a relatively simple optimization problem.
This problem is usually convex and there exist efficient numerical procedures for their solution; e.g.,
\cite{NeelyM05,SX97,Squillante1999,Borst1995,ComBox95,BS83}.

Contrariwise, when dealing with deterministic policies,  one of the main difficulties is that the arrival process at each resource is hardly ever a renewal process.
This prevents one from decomposing the problem and directly using the classic theory of the single queue as it has been done for probabilistic policies.
Given this difficulty, researchers divided this problem in two subproblems:
\begin{itemize}
 \item[\emph{i)}] In the first subproblem, the optimal deterministic policy is searched among all the deterministic policies ensuring that the long-term fractions of jobs to be sent to each resource is kept fixed (denote such fractions by vector $p$);
 \item[\emph{ii)}] In the second subproblem, the output of the first subproblem is employed to develop a further optimization over $p$.
\end{itemize}
In this paper, we focus on the first subproblem and, under some system scaling, we identify a set of policies that are optimal.
This result is used to reduce the second subproblem to the solution of a convex optimization.

One of the folk theorem of queueing theory says that determinism in the inter-arrival times minimizes the waiting time of the single queue \cite{Hajek1983,humblet1982determinism}.
In view of this classic insight and fixing fractions $p$,
it is not surprising that an optimal policy tries to make the arrival process at each resource as
\emph{regular}
(or less variable) as possible.
Thus, our stochastic scheduling problem can be essentially converted into a problem in word combinatorics.
%
If the dispatcher must ensure fractions~$p$,
the main result known in the literature is that \emph{balanced sequences} are optimal admission sequences \cite{Haj86,Gau2000acm}.
However, balanced sequences of given rates $p$ are known to exist in very few particular cases.
These cases are captured by Fraenkel's conjecture, which is still open to the best of our knowledge \cite{Tijdeman00}; see also \cite[Chapter~2]{Gau2000}, which contains an overview of which rates $p$ are balanceable.

Matter of fact, the problem of finding an optimal deterministic policy is still considered difficult \cite{BhulaiFHL12,AnselmiG10,Laan05,hor04,Gau2000,BNS98,Hordijk94analysisof}.
The only exceptions are when resources are stochastically equivalent, where round-robin\footnote{Round-robin sends the $n$-th job to resource $(n\bmod
R)+1$, where $R$ is the total number of resources.} is known to be optimal in a strong sense \cite{LR98}, or 
when the dispatcher routes jobs to two resources, where balanced sequences can be always constructed no matter the value of rates $p$ \cite{Haj86}. In presence of more than two queues, we stress that balanced sequences with given rates $p$ do not exist in general.
This non-existence makes the problem difficult and one still wonders which structure should an optimal policy have when~$p$ is not balanceable.
When the routing is performed to two resources, jobs join the dispatcher following a Poisson process and service times have an exponential distribution,
the optimal rates $p$ as function of the inter-arrival and service times have a fractal structure, see \cite[Figure~8]{GaujalHyon}. This puts further light on the complexity of the problem even in a simple scenario.

While deterministic policies are believed to be more difficult to study than probabilistic policies (deterministic and probabilistic in the sense described above),
they can achieve a significantly better performance~\cite{AnselmiG10}.
This holds also for the variance of the waiting time because, as discussed above, the arrival process at each resource is much more regular in the deterministic case, especially if there are several
resources as we show in this paper.
%
%
%
A particular class of deterministic policies, namely \emph{billiard} policies, have been recently implemented in the context of large volunteer and cloud computing to improve the performance of real applications such as SETI@home~\cite{JavadiKVA11,JavadiTB12}.

\subsection{Contribution}

In the framework described above, we are interested in deriving structural properties of deterministic policies
when the system size is large.
We study a scaling of the system where the arrival rate of jobs, $\lambda k$, grows to infinity proportionally with the number of resources (queues in the following), $R k$,
while keeping the network load (or utilization) fixed. This scaling is often used in the queueing community.
Specifically, there are $R$ types of queues, and $k$ is the number of queues in each type, i.e., the parameter that we will let grow to infinity.
Beyond issues related to the tractability of the problem, this type of scaling
is motivated by the fact that the size of real systems is large and
that replication of resources is commonly used to increase system reliability.

First, with respect to a class of periodic policies, we define the random variable of the waiting time of each incoming job.
This is done using Lindley's equation \cite{lindley} and a suitable initial randomization.
Using such randomization, we can adapt the framework developed by Loynes in \cite{Loynes} to our setting where jobs are sent to a set of parallel queues.
In particular, Theorem~\ref{th0} shows the monotone convergence in distribution of the waiting time of each incoming job.
%
Then, with respect to a given vector $p\in\mathbb{N}^R$,
we define a certain subset of policies that
are periodic with period $k \sum_r p_r$ and such that exactly~$p_r$ jobs are sent in a period to each queue of type~$r$.
While further details will be developed in Section~\ref{sec:per_pol}, this set is meant to imply that queues of a given type are visited in a round-robin manner and that arrivals are ``well distributed'' among the different queue types.
When $k\to\infty$, our main result states that all the policies in this set are \emph{equivalent},
in the sense that they yield the same mean stationary waiting time, and \emph{optimal},
in the sense that no other policy having the same aggregate arrival rate to \emph{all} queues of a given type can do better in minimizing the mean stationary waiting time.
In particular,  we show that the stationary waiting time converges both in distribution and in expectation
to the stationary waiting time of a system of independent D/GI/1 queues
whose parameters only depend on $p$, $\lambda$ and the distribution of the service times.
This is shown in Theorems~\ref{th05} and~\ref{th1}, respectively.

The main idea underlying our proof stands in analyzing the sequence of stationary waiting times along appropriate subsequences.
Along these subsequences, it is possible to extract a pattern for the arrival process of each queue that is common to all members of the subsequence.
Such pattern is exploited to establish monotonicity properties in the language of stochastic orderings. 
These properties hold for the considered subsequences only: they do not hold true along any arbitrary subsequence and counterexamples can be given.
These properties will imply the uniform integrability of the sequence of stationary waiting times and will allow us to work on expected values.

Summarizing, fixing the proportions~$p$ of jobs to send to each queue type and given $k$ large,
our results state that all the policies belonging to the set that we identify in Section~\ref{sec:per_pol} yield the same asymptotic performance and are asymptotically optimal.
Furthermore, using known properties of the D/GI/1 queue, we obtain that the stationary mean waiting time obtained in our limit is a convex function of~$p$.
This reduces the complexity of subproblem \emph{ii)} above because it boils down to the solution of a convex optimization problem.


This paper is organized as follows.
Section~\ref{sec:model} introduces the model under investigation and provides a characterization of the stationary waiting time (Theorem~\ref{th0});
Section~\ref{sec:mainres} introduces a class of policies and presents our main results (Theorems~\ref{th05} and~\ref{th1});
Section~\ref{sec:proofs} is devoted to proofs;
finally, Section~\ref{sec:concl} draws the conclusions of this paper.

\section{Parallel queueing model}
\label{sec:model}


We consider a queueing system composed of~$R$ types of {queues} (or resources, servers) working in parallel.
Each queue of type $r$ is replicated $k$ times, for all $r=1,\ldots,R$, so there are $kR$ queues in total.
Parameter $k$ is a scaling factor and we will let it grow to infinity.
The service discipline of each queue is first-come-first-served (FCFS) and the buffer size of each queue is infinite.
A stream of {jobs} (or customers) joins the queues through a {dispatcher}. 
The dispatcher routes each incoming job to a queue according to some policy and instantaneously.
Figure~\ref{fig:model} illustrates the structure of the queueing model under investigation.
\begin{figure}
\centering
\includegraphics[width=10cm,height=6.5cm]{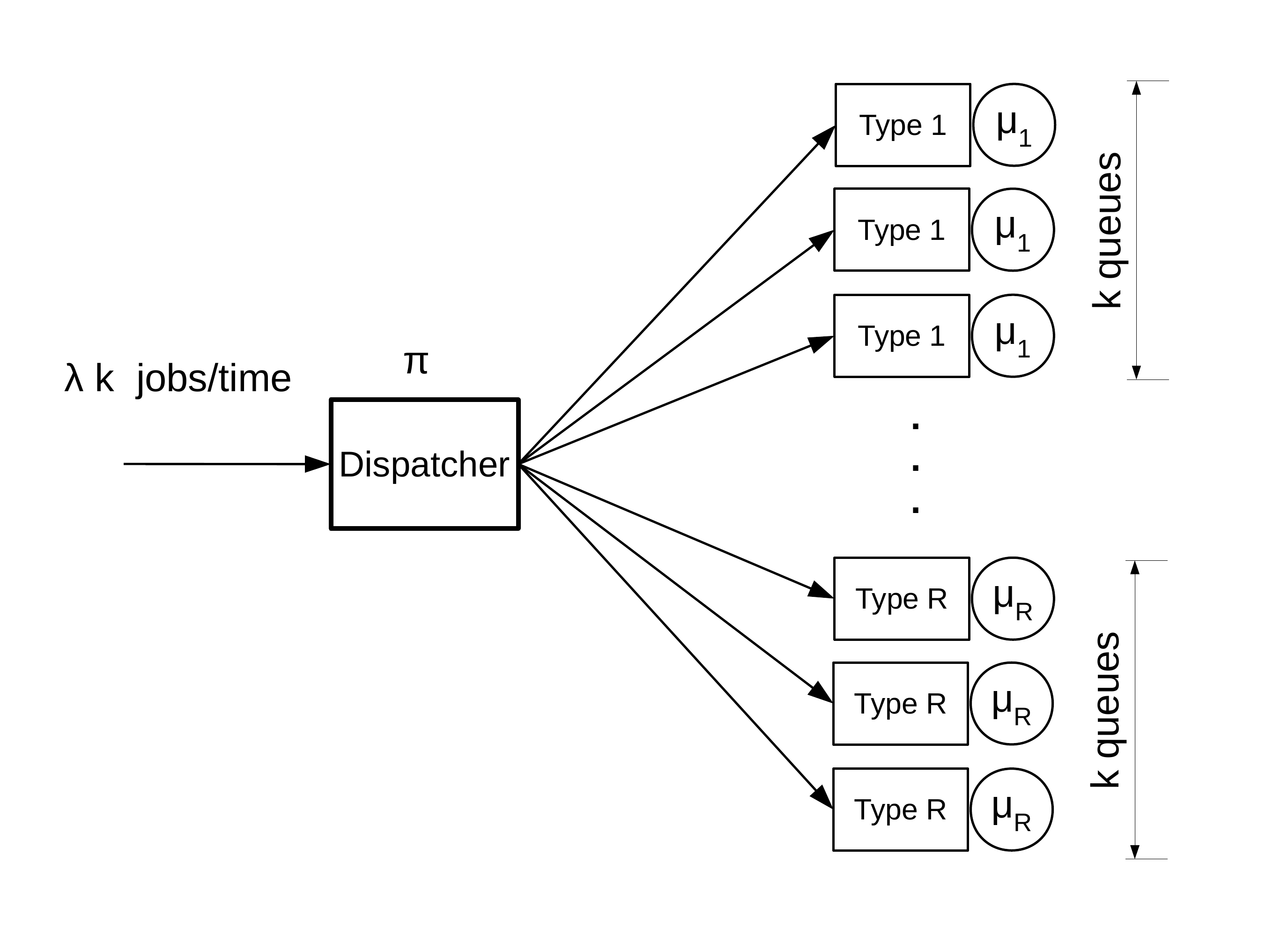}
\caption{Structure of the parallel queueing model under investigation.}
\label{fig:model}
\end{figure}
In the following, indices $r$, $\kappa$, $n$ will be implicitely assumed to range from 1 to $R$, from 1 to $k$, in $\mathbb{N}$, respectively.

All the random variables that follow will be considered belonging to a fixed underlying probability triple~$(\Omega,\mathcal{F},\Pr)$.

Let $(T_{n}^{(k)})_{n\in\mathbb{N}}$ and $(S_{n,\kappa,r}^{(k)})_{n\in\mathbb{N}}$ be given sequences of i.i.d. random variables in $\mathbb{R}_+$\footnote{For any $E\subseteq \mathbb{R}$, we let $E_+\bydef \{x\in E:x>0\}$.}.
These sequences are all assumed to be independent each other.
Quantity $T_{n}^{(k)}$ is interpreted as the inter-arrival time between the $n$-th and the $(n+1)$-th jobs arriving to the dispatcher.
Quantity $S_{n,\kappa,r}^{(k)}$ is interpreted as the service times of the $n$-th job arriving at the $\kappa$-th queue of type~$r$.
We assume that $S_{1,\kappa,r}^{(k)}=_{st} S_{1,1,r}^{(k)}$ and that $\E S_{n,\kappa,r}^{(k)} =\mu_r^{-1}$.
For the arrival process at the dispatcher, we will refer to the following cases.
\begin{case}
\label{as0}
The process $(T_{n}^{(k)})_{n\in\mathbb{N}}$ is a renewal process with rate $\lambda k$ and such that \emph{Var}$\, T_{n}^{(k)} = o(1/k)$.
\end{case}

\begin{case}
\label{as1}
The process $(T_{n}^{(k)})_{n\in\mathbb{N}}$ is a Poisson process with rate $\lambda k$.
\end{case}
\begin{case}
\label{as2}
The process $(T_{n}^{(k)})_{n\in\mathbb{N}}$ is constant with rate $\lambda k$, i.e., $T_{n}^{(k)}=(\lambda k)^{-1}$.
\end{case}
It is clear that Cases~\ref{as1} and~\ref{as2} are both more restrictive than Case~\ref{as0}.

Let $\|\cdot\|$ denote the $L_1$-norm.

Let $q\bydef (q_{r,\kappa})\in \mathbb{Q}_+^{R}\times \mathbb{Q}_+^{k}$ be
such that $\|q\|=1$. Quantity $q_{r,\kappa}$ will be interpreted as the proportion of jobs sent to queue $(r,\kappa)$.

Let $n^*\bydef n^*(q)\bydef \min\{n\in\mathbb{Z}_+: n\,q \in\mathbb{Z}_+^R\times \mathbb{Z}_+^k\}$.
Since $q$ is a vector of rational numbers, $n^*<\infty$.

Let $V$ be a discrete random variable with values in $\{1,\ldots,n^*\}$ such that 
$\Pr(V=i)=1/n^*$, for all $i=1,\ldots,n^*$.
We assume that $V$ is independent of any other random variable.

Let $\A_q(k)$ be the set of all functions $\pi: \mathbb{N} \to \{1,\ldots,R\}\times \{1,\ldots,k\}$ such that
for all $r$ and $\kappa$
\begin{equation}
%
q_{r,\kappa} = \frac{1}{n^*} \sum_{n=V}^{n^*+V-1} \mathbf{1}_{\{\pi(n)=(r,\kappa)\}} \mbox{ \,and\, } \pi(n)=\pi(n+n^*)
\end{equation}
for all $n$, where $\mathbf{1}_{E}$ denotes the indicator function of event~$E$.
Thus, these functions are periodic with period $n^*$ and $n^* q_{r,\kappa}$ is the number of jobs sent in a period to queue~$(r,\kappa)$.
We refer to each element $\pi\in\A_q(k)$ as a \emph{policy} (or a $q$-policy) operated by the dispatcher, and it is interpreted as follows:
$\pi(n)=(r,\kappa)$ means that the $n$-th job arriving to the dispatcher is sent to the $k$-th queue of type $r$ if
$n\ge V$,
otherwise it means that the $n$-th job is discarded.
Thus, the outcome of random variable $V$ gives the index of the first job that is actually served by some queue.
In other words, $\pi(V)$ is the first queue that serves some job.

Let $(T_{n,\kappa,r}^{(k)}(\pi))_{n\in\mathbb{N}}$ be the sequence of inter-arrival times that are induced by policy~$\pi$ at the $\kappa$-th queue of type $r$ (under any of the cases above).
By construction, $T_{n,\kappa,r}^{(k)}(\pi)$ is the sum of a deterministic number of inter-arrival times seen at the dispatcher.
The arrival process $(T_{n,\kappa,r}^{(k)}(\pi))_{n\in\mathbb{N}}$ can be made stationary if it is allowed a shift in time and a suitable randomization (independent of $V$ and of any other random variable) for the inter-arrival time of the first arrival of each queue.
We assume for now that this has been done (details will be given at the beginning of
Section~\ref{sec:proofs}).
As done in \cite{Loynes} and according to \cite[p. 456]{Doob}, this implies that we can extend the stationary process
$(T_{n,\kappa,r}^{(k)}(\pi))_{n\in\mathbb{N}}$ to form a stationary process $(T_{n,\kappa,r}^{(k)}(\pi))_{n\in\mathbb{Z}}$
(clearly, the same holds for the process $(S_{n,\kappa,r}^{(k)})_{n\in\mathbb{N}}$).

The waiting time of the $n$-th job arriving to the $\kappa$-th queue of type $r$ induced by a policy $\pi\in \A(k)$ is denoted by
$W_{n,r,\kappa}^{(k)}(\pi)$.
It is the time between its arrival at the dispatcher (or equivalently at the
queue) and the start of its service, and it is defined as follows:
for $n=0$, $W_{n,r,\kappa}^{(k)}(\pi)=0$ and for $n>0$,
\begin{equation}
\label{eq:lindley_xx}
W_{n,r,\kappa}^{(k)}(\pi) \bydef \left(W_{n-1,r,\kappa}^{(k)}(\pi) + S_{n,\kappa,r}^{(k)}-T_{n,\kappa,r}^{(k)}(\pi)\right)^+,
\end{equation}
where $x^+\bydef \max\{x,0\}$.
Equation \eqref{eq:lindley_xx} is known as Lindley's recursion \cite{lindley}.
The assumption that $W_{0,r,\kappa}^{(k)}(\pi)=0$ serves to avoid technicalities\footnote{Using a standard coupling argument and that each queue will empty in finite time almost surely, what follows can be generalized easily to the case where $W_{0,r,\kappa}^{(k)}(\pi)\ge 0$.}.
It is known that the sequence of random variables $(W_{n,r,\kappa}^{(k)}(\pi))_{n\in\mathbb{N}}$ converges in distribution to the random variable
\begin{equation}
\label{eq:lim_W_rk}
W_{r,\kappa}^{(k)}(\pi) \bydef \left(\sup_{n\ge 0} \sum_{n'=0}^n S_{-n',\kappa,r}^{(k)}-T_{-n',\kappa,r}^{(k)}(\pi)\right)^+
\end{equation}
and that $W_{n,r,\kappa}^{(k)}(\pi)\le_{st}W_{n+1,r,\kappa}^{(k)}(\pi)$, where $\le_{st}$ denote the usual stochastic order; see \cite{Loynes}.
We refer to $W_{r,\kappa}^{(k)}(\pi)$ as the \emph{stationary} waiting time of jobs at the $\kappa$-th queue of type $r$.

Given $\pi$, let $f:\mathbb{N}\to\mathbb{N}\times \{1,\ldots,R\}\times \{1,\ldots,k\}$ be a mapping with the following meaning:
$f(n)=(n',r,\kappa)$ means that the $n$-th job arriving to the dispatcher is the $n'$-th customer joining queue $(r,\kappa)$.
If $n<V$, no job is sent to any queue and thus we assume $f(n)=0$.
Note that $f(n)$ is a deterministic function of random variable $V$.
Since $V$ is uniform over $\{1,\ldots,n^*\}$, 
for all $n\ge n^*$
we have
\begin{equation}
\label{eq:dec_n_nhat}
\Pr( f_2(n)=r, f_3(n)=\kappa ) = q_{r,\kappa},
\end{equation}
where $f_j$ refers to the $j$-th component of $f$.
With this notation,
quantity $W_{f(n)}^{(k)}(\pi)$
is the waiting time of the $n$-th job arriving to the dispatcher induced by a policy $\pi\in \A_q(k)$, for all $n\ge n^*$.


Now, let $(Q_{r,\kappa})_{\forall r,\kappa}$ denote a partition of set $\{1,\ldots,n^*\}$.
The subsets $Q_{r,\kappa}$, for all $r$ and $k$, are thus disjoint and we further require that the number of points in $Q_{r,\kappa}$ is $n^*\,q_{r,\kappa}$.
Let
\begin{equation}
W^{(k)}(\pi)\bydef \sum_{r=1}^R \sum_{\kappa=1}^k
\mathbf{1}_{\{V\in Q_{r,\kappa}\}}
W_{r,\kappa}^{(k)}(\pi).
\end{equation}
Since $V$ is independent of the $W_{r,\kappa}^{(k)}(\pi)$'s,
the distribution of $W^{(k)}(\pi)$ is the finite mixture of the $W_{r,\kappa}^{(k)}(\pi)$'s with weights~$q$.
Next theorem says that $W^{(k)}(\pi)$ can be interpreted as the right random variable describing the stationary waiting time of jobs achieved with policy $\pi\in \A_q(k)$.
It is proven by adapting the framework developed by Loynes in \cite{Loynes}.
In the remainder of the paper, convergence in distribution (in probability) is denoted by $\xrightarrow[]{d}$ (respectively, $\xrightarrow[]{\Pr}$).

\begin{theorem}
\label{th0}
Let Case~\ref{as0} hold.
Let $q$ be such that $q_{r,\kappa} \lambda<\mu_r$ for all $r,\kappa$ and $\pi\in \A_q(k)$.
Then,
\begin{equation}
\label{eq:st_W}
W_{f(n)}^{(k)}(\pi)\le_{st} W_{f(n+1)}^{(k)}(\pi)
\end{equation}
and
\begin{equation}
\label{eq:wc_W}
W_{f(n)}^{(k)}(\pi) \xrightarrow[n\rightarrow\infty]{d} W^{(k)}(\pi).
\end{equation}
\end{theorem}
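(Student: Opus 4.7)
The plan is to adapt Loynes' classical GI/GI/1 framework \cite{Loynes} to the parallel setting by first treating each queue individually and then transferring the conclusions to the dispatcher index through the randomization~$V$. The first step is to make precise the stationarization alluded to in Section~\ref{sec:proofs}: because $T_{n,\kappa,r}^{(k)}(\pi)$ is, for fixed $\pi$, the sum of a \emph{deterministic} number of consecutive i.i.d.\ dispatcher inter-arrivals $T_n^{(k)}$, the only obstruction to stationarity of $(T_{n,\kappa,r}^{(k)}(\pi))_{n\in\mathbb{N}}$ is an edge effect at the first index. Randomizing that first inter-arrival (independently of $V$ and of the services) and invoking Doob's extension, as in \cite{Loynes}, yields a stationary ergodic bi-infinite process $(T_{n,\kappa,r}^{(k)}(\pi))_{n\in\mathbb{Z}}$, jointly with the already i.i.d.\ services.

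I would then apply Loynes' theorem separately at each queue $(r,\kappa)$: the stability condition $\lambda q_{r,\kappa}<\mu_r$ is exactly Loynes' hypothesis, yielding both the per-queue stochastic monotonicity $W_{n,r,\kappa}^{(k)}(\pi)\le_{st} W_{n+1,r,\kappa}^{(k)}(\pi)$ and the distributional convergence $W_{n,r,\kappa}^{(k)}(\pi)\xrightarrow[n\rightarrow\infty]{d} W_{r,\kappa}^{(k)}(\pi)$ with the pathwise representation~\eqref{eq:lim_W_rk}. Because $V$ is uniform on $\{1,\dots,n^*\}$, independent of $(T,S)$, and $\pi$ is $n^*$-periodic, for every $n\ge n^*$ the law of $W_{f(n)}^{(k)}(\pi)$ is the uniform mixture over $v\in\{1,\dots,n^*\}$ of the laws of $W_{n'(v,n),\pi(n)}^{(k)}(\pi)$, where $n'(v,n)\bydef|\{m:v\le m\le n,\,\pi(m)=\pi(n)\}|$ is the queue-local counter implicit in the definition of $f$. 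For the distributional limit~\eqref{eq:wc_W}, I would observe that $n'(v,n)\to\infty$ as $n\to\infty$ for each fixed $v$, so the per-queue convergence above applies to every summand; combined with the marginal identity~\eqref{eq:dec_n_nhat}, this identifies the limit with the finite mixture $W^{(k)}(\pi)$ built from the partition $(Q_{r,\kappa})$.

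For the stochastic monotonicity~\eqref{eq:st_W}, I would couple the law of $W_{f(n+1)}^{(k)}(\pi)$ to that of $W_{f(n)}^{(k)}(\pi)$ via a distribution-preserving reindexing of $V$: the uniformity of $V$ modulo $n^*$ together with the $n^*$-periodicity of $\pi$ lets me realign the mixture for $W_{f(n+1)}^{(k)}$ with the mixture for $W_{f(n)}^{(k)}$ in such a way that each queue-local counter is pushed forward by exactly one slot, whereupon the per-queue Loynes monotonicity concludes. I expect this coupling to be the main obstacle: consecutive dispatcher arrivals typically land at different queues, so no direct pointwise comparison inside a single Loynes sequence is available, and one must carefully absorb the queue change into a reparametrization of $V$ that preserves its uniform law before the per-queue monotonicity can be applied.
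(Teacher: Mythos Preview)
Your proposal is correct and follows the same route as the paper: the cyclic reindexing of $V$ you describe is exactly the coupling $V_2=V_1-1$ (mod $n^*$) that the paper uses together with Strassen's theorem for \eqref{eq:st_W}, and the per-queue Loynes convergence averaged into the $q$-mixture is the paper's argument for \eqref{eq:wc_W}. The only point the paper makes more explicit is that, because the routed queue still cycles in $n$ even after conditioning on $V$, it first establishes the limit along the subsequences $(nn^*+i)_{n\in\mathbb{N}}$, $i=1,\dots,n^*$, and then invokes Lemma~\ref{lm:ddard} to upgrade subsequential convergence to convergence of the full sequence---your phrase ``the per-queue convergence applies to every summand'' glosses over this step.
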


By using the monotone convergence theorem, Theorem~\ref{th0} implies that also the moments of $W_{f(n)}^{(k)}(\pi)$ converge to the moments of $W^{(k)}(\pi)$, provided that they are finite.

Finally, for a given $p\in \mathbb{R}_+^R$, we define the auxiliary random variable $W_r(p)$,
which corresponds to the stationary waiting time of a D/GI/1 queue with inter-arrival times
$(T_{n,r}(p))_{n\in\mathbb{N}}$ where $T_{n,r}(p)={\|p\|}/{(p_r\lambda)}$
and service times $(S_{n,1,r})_{n\in\mathbb{N}}$,  for all $r$.
Let also $\overline{V}\bydef\overline{V}(p)$ be a random variable with values in $\{1,\ldots,R\}$ independent of any other random variable and such that $\Pr(\overline{V}=r)=p_r/\|p\|$, for all $r$, and
let
\begin{equation}
\label{eq:W_p_ascs}
W(p)\bydef \sum_{r=1}^R \mathbf{1}_{\{\overline{V}=r\}} W_{r}(p).
\end{equation}
Note that $\E W(p)=\sum_r \frac{p_r}{\|p\|} \E W_r(p)$
can be interpreted as the mean waiting time of $R$ independent D/GI/1 queues averaged over weights~$p/\|p\|$.

We will be interested in establishing convergence results for $W^{(k)}$ when $k\to\infty$.
With respect to some policies to be defined, we will show forms of convergence to $W(p)$.

\subsection{Discussion}
\label{eq:discussion}

Some remarks about the model above and Theorem~\ref{th0} follow:

\begin{itemize}

\item

To agree on a definition for $\E W^{(k)}$ or other moments of $W^{(k)}$, one does not necessarily need to know the distribution of $W^{(k)}$, e.g., one can achieve that using Cesaro sums \cite{Gau2000}. Matter of fact, existing works agree on the structure of $\E W^{(k)}$ \emph{without} constructing the distribution of $W^{(k)}$.
On the other hand, our approach needs to know the distribution of $W^{(k)}$ (and thus Theorem~\ref{th0}) because we can prove convergence results for $\E W^{(k)}$ \emph{only} through a distributional convergence argument, that is \cite[Theorem 3.5, pp. 31]{Bill}.
In particular, to prove $\E W^{(k)}$ converges to $\E W(p)$, we will prove convergence in distribution of $W^{(k)}$ to $W(p)$ and then the uniform integrability of the sequence of the $W^{(k)}$'s.
This is the reason why we need the characterization of the distribution of the stationary waiting time $W^{(k)}$, which we give in Theorem~\ref{th0} and prove through classical arguments.

\item
Though several works focused on finding policies that minimize the expected value $\E W^{(k)}$ (see the introduction), the analysis of $\E W^{(k)}$ in our scaling where $k\to\infty$ seems new.

\item
We assume that $q$ is a vector of rational numbers.
From a practical standpoint, this is not a loss of generality for obvious reasons. As an additional remark to support this assumption, we note that it has been proven that in several cases the $q$-vector that minimize $\min_{\pi\in\A_q(k)} \E W^{(k)}(\pi)$ is indeed rational \cite[Theorem~32, p.136]{Gau2000}; see also \cite{van2003structure}.

Our approach needs this assumption to prove \eqref{eq:wc_W}.
In the case where $\pi(n)$ is not periodic, a case that we do not consider here, and
the limit 
$\lim_{n\to\infty} \frac{1}{n} \sum_{n'=1}^n \mathbf{1}_{\{\pi(n')=(r,\kappa)\}}$
does not exist,
it would be interesting to know whether some convergence in distribution of $W_{f(n)}^{(k)}(\pi)$ occurs.
A totally different argument will be needed here.


\item
The fact that in Case~\ref{as0} we require that {Var}$\, T_{n}^{(k)} = o(k)$ is not a loss of generality for Theorem~\ref{th0}, as we do not let $k\to\infty$ there.
Case \ref{as0} covers the case where $T_{n}^{(k)}$ has a $(Gk,\alpha)$-phase-type distribution
where both $G$ and $\alpha$ are not functions of~$k$.

\item
We may have assumed that each queue of type $r$ was replicated $k z_r + o(k)$ times instead of just $k$ times.
This is essentially equivalent to our setting and our proofs can be easily adapted to this case though we do not do it for simplicity.

\end{itemize}

\section{Main results}
\label{sec:mainres}

For $p\in\mathbb{Z}_+^R$, let
\begin{equation}
\mathcal{P}_p(k)\bydef \left\{q\in\mathbb{Q}_+^R\times \mathbb{Q}_+^k:  \sum\limits_{\kappa=1}^{k} \tfrac{q_{r,\kappa}}{k}= \tfrac{p_r}{\|p\|},\, \forall r\right\}
\end{equation}
and
\begin{equation}
\overline{\A}_p(k)\bydef \bigcup_{ q\in \mathcal{P}_p(k)} \A_q(k).
\end{equation}
The set $\overline{\A}_p(k)$ is interpreted as the set of all periodic policies for which $\lambda k \, \frac{p_r}{\|p\|}$ is the aggregate mean arrival rate of jobs to all type-$r$ queues.
We consider the following problem.
\begin{problem}
\label{prob1}
Let $p\in\mathbb{Z}_+^R$ be given.
Determine the optimizers and the optimal objective function value of 
\begin{equation}
\min_{\pi\in\overline{\A}_p(k)} \E W^{(k)}(\pi).
\end{equation}
\end{problem}
As discussed in the introduction, this is considered as a difficult problem.
We are interested in establishing structural properties of Problem~\ref{prob1} when $k$ is large.
In the following, we define a certain class of policies and then present our main results.


\subsection{A class of periodic policies}
\label{sec:per_pol}

For $p\in\mathbb{Z}_+^{R}$, we define $\J_p^{(k)}$ as the subset of all policies
$\pi\in \overline{\A}_p(k)$
that satisfy
the following properties:
\begin{itemize}

\item
The sequence $(\pi_1(n))_{n\in\mathbb{N}}$ has period $\|p\|$ and $p_r$ is the number of jobs sent to type-$r$ queues per period, for $1 \le r \le R$.

\item Let $n_{1},n_{2},\ldots$ be the subsequence of all jobs that are sent to queues of type $r$.
Then, $\pi_2(n_1)=1$, and $\pi(n_{j+1}) = \pi(n_{j}) + (0,1)$ if $\pi_2(n_{j}) < k$
and $\pi(n_{j+1}) = (r,1)$ otherwise.

\end{itemize}


The first property specifies the periodicity of $\pi$ with respect the types of queues, while the second with respect to queues.
Thus, the second property says that queues of type~$r$ are accessed by jobs in a round-robin (or cyclic) order starting from queue 1, and we need it to ensure that the cardinality of $\J_p^{(k)}$, i.e., $|\J_p^{(k)}|$, does not
vary with $k$.
One can verify that
\begin{equation}
 |\J_p^{(k)}|=\frac{\|p\|!}{\prod_r (p_r!)}.
\end{equation}

These policies can be implemented in a distributed manner.
More precisely, one can think that there are two tiers of dispatchers:
the dispatcher in the first tier schedules jobs \emph{inter}-group,
while the dispatchers in the second tier, $R$ in total, schedule jobs \emph{intra}-group and implement round-robin.

With respect to a sequence of policies $(\pi^{(k)})_{k\in\mathbb{N}}$, where $\pi^{(k)}\in \J_p^{(k)}$,
we will show (Lemma~\ref{lm:TSL05})
\begin{equation}
T_{n,\kappa,r}^{(k)}(\pi^{(k)}) \xrightarrow[k\rightarrow\infty]{\Pr} \tfrac{\|p\|}{p_r\lambda},\qquad \forall n.
\end{equation}
This means that the finite dimensional distributions of the arrival process at each queue will be `close' to the deterministic process, when $k$ is large, which implies that some form of convergence to $W(p)$ should occur in view of the continuity of the stationary waiting time~\cite{borovkov1976stochastic}.

\subsection{Asymptotic equivalence and optimality}
\label{eq:main_res}

Next theorem proves a first form of convergence of $W^{(k)}(\pi^{(k)})$ to $W(p)$.

\begin{theorem}
\label{th05}
Let Case~\ref{as0} hold.
Let $p\in\mathbb{Z}_+^{R}$ be such that $\lambda \frac{p_r}{\|p\|}<\mu_r $ for all~$r$.
Let also an arbitrary sequence $(\pi^{(k)})_{k\in\mathbb{N}}$ be given where $\pi^{(k)}\in\J_p^{(k)}$ for all $k$.
Then
\begin{equation}
\label{eq:lim_Dist_Wk_pqsa}
W^{(k)}(\pi^{(k)}) \xrightarrow[k\rightarrow\infty]{d}  W(p).
\end{equation}
\end{theorem}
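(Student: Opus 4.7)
The plan is to exploit the mixture representation of $W^{(k)}(\pi^{(k)})$ and reduce the problem to convergence in distribution of the per-queue stationary waiting times, relying on the fact that any $\pi^{(k)}\in\J_p^{(k)}$ drives the arrivals at each queue toward a deterministic stream as $k\to\infty$.

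First I would establish an intra-type symmetry: for $\pi^{(k)}\in\J_p^{(k)}$, the round-robin rule cycling over the $k$ queues of type $r$ makes the arrival process at queue $(r,\kappa)$ a deterministic time shift of the arrival process at queue $(r,1)$. After the initial randomization that stationarizes each arrival process (introduced at the start of Section~\ref{sec:proofs}), shifted stationary sequences have identical laws, so $W_{r,\kappa}^{(k)}(\pi^{(k)})=_{st} W_{r,1}^{(k)}(\pi^{(k)})$ for every $\kappa$. Combining this with the independence of $V$ and the fact that the aggregate type-$r$ weight of $V$ equals $p_r/\|p\|$, we obtain the mixture identity
\begin{equation*}
\Pr\!\bigl(W^{(k)}(\pi^{(k)})\le x\bigr) \;=\; \sum_{r=1}^R \frac{p_r}{\|p\|}\,\Pr\!\bigl(W_{r,1}^{(k)}(\pi^{(k)})\le x\bigr),
\end{equation*}
which has exactly the same structure as the law of $W(p)$ in \eqref{eq:W_p_ascs}. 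It therefore suffices to prove $W_{r,1}^{(k)}(\pi^{(k)})\xrightarrow[k\to\infty]{d} W_r(p)$ for each $r$.

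For this marginal convergence I would invoke Lemma~\ref{lm:TSL05}, which yields $T_{n,1,r}^{(k)}(\pi^{(k)})\xrightarrow[k\to\infty]{\Pr}\|p\|/(p_r\lambda)$ for every fixed $n$. Together with the independence and the i.i.d.\ property of the service sequence and the assumption of Case~\ref{as0}, this gives convergence of the finite-dimensional distributions of the driving increments $(S_{n,1,r}^{(k)}-T_{n,1,r}^{(k)}(\pi^{(k)}))_n$ to those of $(S_{n,1,r}-\|p\|/(p_r\lambda))_n$. The stability hypothesis $\lambda p_r/\|p\|<\mu_r$ makes the limiting drift $\mu_r^{-1}-\|p\|/(p_r\lambda)$ strictly negative. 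Writing $W_{r,1}^{(k)}(\pi^{(k)})$ as the stationary Loynes supremum \eqref{eq:lim_W_rk} driven by this sequence, one applies a continuity theorem for stationary waiting times (e.g.\ \cite{borovkov1976stochastic}) to transfer the weak convergence of the driving sequence through the Loynes functional, obtaining $W_{r,1}^{(k)}(\pi^{(k)})\xrightarrow[k\to\infty]{d}W_r(p)$. Substituting into the mixture identity and passing to the limit inside the finite sum over $r$ then yields \eqref{eq:lim_Dist_Wk_pqsa} at every continuity point of $W(p)$.

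The main obstacle is the passage through the Loynes supremum: finite-dimensional weak convergence of the driving increments does not by itself transfer to the infinite-horizon supremum, so a separate tightness argument is needed. The key ingredient is the uniform negative drift produced by the stability assumption together with the vanishing variance $\Var\,T_n^{(k)}=o(1/k)$ in Case~\ref{as0}; these should provide tightness of $\{W_{r,1}^{(k)}(\pi^{(k)})\}_k$ via a Lyapunov-type control on the supremum of the random walk. Combined with the continuity of the truncated suprema (which are continuous functions of finitely many service and inter-arrival times), this tightness closes the argument.
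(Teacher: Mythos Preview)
Your proposal is correct and follows the same route as the paper: reduce via the mixture representation to per-type stationary waiting times, invoke Lemma~\ref{lm:TSL05} for convergence in probability of the inter-arrival times, and then apply Borovkov's continuity theorem for the stationary waiting time. Two refinements in the paper are worth noting. First, since $|\J_p^{(k)}|$ is independent of $k$, the paper uses Lemma~\ref{lm:ddard} to reduce an arbitrary sequence $(\pi^{(k)})$ to a ``natural scaling'' one satisfying $\pi_1^{(k)}(n)=\pi_1^{(1)}(n)$ for all $n$; this reduction is what makes the combinatorics behind Lemma~\ref{lm:TSL05} (namely $a_{j,r}^{(k)}/k\to\|p\|/p_r$) go through, and you leave it implicit. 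Second, your final-paragraph tightness concern is unnecessary: by~\eqref{eq:exp_Tn} one has $\E T_{n,r,\kappa}^{(k)}=\|p\|/(\lambda p_r)$ \emph{exactly} for every $k$, not just asymptotically, and this constant-mean condition together with the finite-dimensional convergence is precisely the hypothesis of Borovkov's Theorem~22, so the passage through the Loynes supremum comes for free without any separate Lyapunov-type control.
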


Theorem~\ref{th05} is not enough to claim that $\E W^{(k)}(\pi^{(k)})$ converges as well.
Convergence of the first moment is important from an operational standpoint, as in practice one desires to optimize over $\E W^{(k)}$ or $\Var\, W^{(k)}$. 
Under some additional assumptions,
next theorem states that also the expected value and the variance of $W^{(k)}$ converge.
Furthermore, it states that all the policies in set $\J_p^{(k)}$ are both {asymptotically equivalent} and {asymptotically optimal}, with respect to the criterion in Problem~\ref{prob1}.

\begin{theorem}
\label{th1}
Let Case~\ref{as1} or \ref{as2} hold.
Let $p\in\mathbb{Z}_+^{R}$ be such that $\lambda \frac{p_r}{\|p\|}<\mu_r $ for all~$r$.
Let also an arbitrary sequence $(\pi^{(k)})_{k\in\mathbb{N}}$ be given where $\pi^{(k)}\in\J_p^{(k)}$ for all $k$.
If $\E [(S_{1,\kappa,r}^{(k)})^3]<\infty$, then
\begin{equation}
\label{eq:MLB}
%
\liminf_{k\to\infty} \min_{\pi\in \overline{\A}_p(k)} \E W^{(k)}(\pi) =
\lim_{k\to\infty} \E W^{(k)}(\pi^{(k)})
\end{equation}
\begin{equation}
\label{eq:lim_E_Wk_pqsa}
\lim_{k\to\infty} \E W^{(k)}(\pi^{(k)}) = \E W(p).
\end{equation}
Furthermore, if $\E [(S_{1,\kappa,r}^{(k)})^5]<\infty$, then
\begin{equation}
\label{eq:lim_Var_Wk_pqsa}
\lim_{k\to\infty} \mbox{\emph{Var}} \, W^{(k)}(\pi^{(k)}) = 
\sum_{r} \tfrac{p_r}{\|p\|}   \left( \mbox{\emph{Var}} \, W_{r}(p) +   \left( \E W_{r}(p) - \E W(p) \right)^2 \right).
\end{equation}
\end{theorem}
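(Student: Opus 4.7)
The plan is to combine the distributional convergence from Theorem~\ref{th05} with a uniform integrability (UI) argument to promote it to convergence of first and second moments, and to obtain the matching optimality lower bound in~\eqref{eq:MLB} via the folk theorem that a deterministic arrival stream minimizes mean waiting.

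For~\eqref{eq:lim_E_Wk_pqsa}, Theorem~\ref{th05} already yields $W^{(k)}(\pi^{(k)}) \xrightarrow[k\to\infty]{d} W(p)$, so by~\cite[Theorem~3.5]{Bill} it is enough to prove that $\{W^{(k)}(\pi^{(k)})\}_{k\in\mathbb{N}}$ is uniformly integrable. Since $W^{(k)}(\pi^{(k)})$ is a finite mixture of the per-queue stationary waiting times $W_{r,\kappa}^{(k)}(\pi^{(k)})$ with weights summing to one, it suffices to show $\sup_k \E (W_{r,\kappa}^{(k)}(\pi^{(k)}))^{1+\delta} < \infty$ for some $\delta>0$, and I would take $\delta=1$. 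I would bound $\E (W_{r,\kappa}^{(k)})^2$ using Loynes' representation~\eqref{eq:lim_W_rk} together with a Kingman-type moment estimate for the stationary waiting time that involves only moments of $S-T$ and the drift $\E(T-S)>0$. The inter-arrival times $T_{n,\kappa,r}^{(k)}(\pi^{(k)})$ are deterministic sums of $O(k)$ independent dispatcher inter-arrivals whose second moment converges to $(\|p\|/(p_r\lambda))^2$ by the $o(1/k)$ variance assumption, and $\lambda p_r/\|p\|<\mu_r$ keeps the drift bounded away from zero uniformly in $k$; combined with $\E[S^3]<\infty$ this produces the required uniform second-moment bound.

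For~\eqref{eq:MLB}, the inequality $\liminf_k \min_{\pi\in\overline{\A}_p(k)} \E W^{(k)}(\pi) \le \E W(p)$ follows immediately from $\pi^{(k)}\in\overline{\A}_p(k)$ and~\eqref{eq:lim_E_Wk_pqsa}. For the converse, fix any $\pi\in\A_q(k)$ with $q\in\mathcal{P}_p(k)$. Each queue $(r,\kappa)$ sees a stationary arrival process at rate $\nu_{r,\kappa}\bydef\lambda k q_{r,\kappa}$, and by the Hajek--Humblet theorem~\cite{Hajek1983,humblet1982determinism} $\E W_{r,\kappa}^{(k)}(\pi)\ge g_r(\nu_{r,\kappa})$, where $g_r(\nu)$ denotes the stationary mean waiting time of a D/GI/1 queue with arrival rate $\nu$ and type-$r$ service distribution. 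Hence
\[
\E W^{(k)}(\pi) = \sum_{r,\kappa} q_{r,\kappa}\,\E W_{r,\kappa}^{(k)}(\pi) \;\ge\; \sum_{r,\kappa} q_{r,\kappa}\, g_r(\nu_{r,\kappa}) = \frac{1}{\lambda k}\sum_{r,\kappa} \nu_{r,\kappa}\, g_r(\nu_{r,\kappa}).
\]
By Little's law, $\nu \mapsto \nu\, g_r(\nu)$ is the mean number of jobs waiting in a D/GI/1 queue, which is a convex function of $\nu$ (standard queueing-theoretic fact). Applying Jensen's inequality within each type $r$ with uniform weights $1/k$ (the $k$ rates $\nu_{r,1},\dots,\nu_{r,k}$ have average $\lambda p_r/\|p\|$ by definition of $\mathcal{P}_p(k)$) gives
\[
\frac{1}{k}\sum_{\kappa=1}^k \nu_{r,\kappa}\, g_r(\nu_{r,\kappa}) \;\ge\; \tfrac{\lambda p_r}{\|p\|}\, g_r\!\left(\tfrac{\lambda p_r}{\|p\|}\right),
\]
so after dividing by $\lambda k$ and summing over $r$, $\E W^{(k)}(\pi) \ge \sum_r (p_r/\|p\|)\, g_r(\lambda p_r/\|p\|) = \E W(p)$ uniformly in $k$ and $\pi$, closing the liminf.

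For~\eqref{eq:lim_Var_Wk_pqsa}, the law of total variance applied to $W(p) = \sum_r \mathbf{1}_{\{\overline{V}=r\}} W_r(p)$ conditioned on $\overline{V}$ directly produces the displayed right-hand side, so it remains to show $\E (W^{(k)}(\pi^{(k)}))^2 \to \E W(p)^2$. By Theorem~\ref{th05} and~\cite[Theorem~3.5]{Bill} once more, this reduces to UI of $(W^{(k)}(\pi^{(k)}))^2$, i.e., to $\sup_k \E (W^{(k)}(\pi^{(k)}))^{2+\delta} < \infty$ for some $\delta>0$. Taking $\delta=2$ and repeating the Kingman-type moment bound raises the required service-time assumption to $\E[(S_{1,\kappa,r}^{(k)})^5] < \infty$, matching the paper's hypothesis. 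The main obstacle I expect is precisely these uniform moment bounds: the arrival process at each queue is non-renewal and its law varies with $k$, so producing clean estimates requires careful exploitation of $T_{n,\kappa,r}^{(k)}(\pi^{(k)}) \xrightarrow[k\to\infty]{\Pr} \|p\|/(p_r\lambda)$ together with the $o(1/k)$ control on the dispatcher inter-arrival variance to push the Kingman-style bound through for all $k$ large.
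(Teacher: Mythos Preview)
Your lower-bound argument for~\eqref{eq:MLB} is essentially the paper's: both use the Hajek--Humblet ``determinism minimizes waiting'' bound at each queue, followed by a convexity/Jensen step within each type~$r$ (the paper phrases it via Karamata's inequality, you via convexity of $\nu\mapsto\nu g_r(\nu)$; these are equivalent here). The variance identity via the law of total variance is also the same.

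The genuine gap is in your uniform integrability argument. You invoke a ``Kingman-type moment estimate'' directly on $W_{r,\kappa}^{(k)}$, but Kingman's inequality and its higher-moment analogues are results for \emph{GI/GI/1} queues, whereas the arrival process $(T_{n,\kappa,r}^{(k)})_n$ at each queue is stationary but \emph{not} renewal: within each period of length $p_r$ the inter-arrival laws differ (the counts $a_{1,r}^{(k)},\ldots,a_{p_r,r}^{(k)}$ are not equal for finite~$k$). You flag this as ``the main obstacle'' but do not resolve it. A further warning sign is that your argument, as written, uses only the $o(1/k)$ variance hypothesis of Case~\ref{as0}, yet the theorem explicitly requires Case~\ref{as1} or~\ref{as2}; this discrepancy is precisely where the missing work lies.

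The paper's route is quite different from a direct moment bound. It first dominates $W_{r,\kappa}^{(k)}\le_{st}\overline{W}_r^{(k)}$, where $\overline{W}_r^{(k)}$ is the stationary waiting time of an auxiliary GI/GI/1 queue whose i.i.d.\ inter-arrival time is the sum of $\min_j a_{j,r}^{(k)}$ dispatcher inter-arrivals. This reduces to the renewal setting, but the auxiliary queue may be unstable for small~$k$, so a direct uniform Kingman bound still does not follow. Instead, the paper proves a structural monotonicity: along each arithmetic subsequence $k_m+i$ with $k_m=m\,\mathrm{lcm}(p)$, one has $a_{j,r}^{(k_{m+1}+i)}=a_{j,r}^{(k_m+i)}+\mathrm{lcm}(p)\|p\|/p_r$ (Fact~\ref{f1}), and from this deduces $-\overline{T}_r^{(k_{m+1}+i)}\le_{icx}-\overline{T}_r^{(k_m+i)}$, hence $\overline{W}_r^{(k_{m+1}+i)}\le_{icx}\overline{W}_r^{(k_m+i)}$. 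The proof of this $\le_{icx}$ step is exactly where the Poisson or deterministic assumption is used (comparing Erlang laws with equal mean in the Poisson case; trivial in the deterministic case), and it can fail under general Case~\ref{as0}. The monotonicity then gives $\sup_{k\ge k^*}\E(\overline{W}_r^{(k)})^2=\max_i \E(\overline{W}_r^{(k_{m_i^*}+i)})^2<\infty$, which is the UI bound you need. Your proposal would have to supply either this subsequence-monotonicity machinery or an alternative device (e.g., a blocking into renewal cycles of length $p_r$ plus a uniform control of the within-block maximum) to close the gap.
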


Provided that~$k$ is large, thus, no other policy in $\overline{\A}_p(k) \setminus \J_p^{(k)}$ can do better than any $\pi^{(k)}\in\J_p^{(k)}$ to minimize $\E W^{(k)}$.
It also explicits the limiting value of $\E W^{(k)}(\pi^{(k)})$, which is $\E W(p)$.
It is known that $\E W(p)$ is a convex function in $p$ (e.g., \cite{NeelyM05}).



To prove Theorem~\ref{th1}, we use the well-known fact that \cite{Asmussen}
\begin{equation}
\label{eq:icx_W_r}
W_{r}(p) \le_{icx} W_{r,\kappa}^{(k)}(\pi)
\end{equation}
for any $\pi\in\overline{\A}_p(k)$, where $\le_{icx}$ denotes the increasing-convex order (see, e.g., \cite{shantikumar,stoyan1983comparison} for their definition).
Using this lower bound first and then that the waiting time of the D/GI/1 queue is convex increasing in its arrival rate, it is not difficult to show that
\begin{equation}
\label{eq:MLB2}
\E W (p) \le \E W^{(k)}(\pi).
\end{equation}
Then, we prove that the sequence $\E W^{(k)}(\pi^{(k)})$ is upper bounded by a sequence that converges to the lower bound in \eqref{eq:MLB2}.
An observation here is that the lower bound \eqref{eq:icx_W_r} holds under conditions that are weaker than those assumed in this paper; see \cite{humblet1982determinism}.
For instance, it is possible to extend~\eqref{eq:MLB2} (and thus Theorem~\ref{th1})
to the case where
i) policies are not periodic,
ii) the fractions of jobs to send in each queue are not necessarily rational numbers, and
iii) policies are randomized
\cite{Puterman1994},
that is the case where $\pi(n)$ is any probability mass function over the set of queues. 
We do not investigate these extensions in further detail.

\section{Proofs}
\label{sec:proofs}

In this section, we develop proofs for Theorems~\ref{th0},~\ref{th05} and~\ref{th1}.
Before doing this, we fix some additional notation and show how it is possible to make the arrival process at each queue stationary with respect to any policy in $\pi^{(k)}\in \mathcal{A}_q(k)$, $q\in \mathbb{Q}_+^{kR}$ such that $\|q\|=1$  (as assumed in Section~\ref{sec:model}).

Let us consider the $\kappa$-th queue of type $r$ and its arrival process $(T_{n,\kappa,r}^{(k)})_{n\in\mathbb{N}}$. Each inter-arrival time clearly depends on the policy $\pi^{(k)}$ implemented by the dispatcher, i.e., $T_{n,\kappa,r}^{(k)} = T_{n,\kappa,r}^{(k)}(\pi^{(k)})$, though in the following we drop such dependence for notational simplicity.
Since $\pi^{(k)}$ is periodic by construction with period $n^*$ and $n^* q_{r,\kappa}$ jobs have to be sent within a cycle to the queue identified by the couple $(r,\kappa)$,
the sequence $(T_{n,\kappa,r}^{(k)})_{n\in\mathbb{N}}$ is composed of a repeated pattern
of $n^* q_{r,\kappa}$ inter-arrival times, that we can write as
\begin{equation}
\label{eq:pattern_general}
 A_{1,\kappa,r}^{(k)},A_{2,\kappa,r}^{(k)}, \ldots, A_{n^* q_{r,\kappa},\kappa,r}^{(k)},
\end{equation}
where each quantity $A_{j,\kappa,r}^{(k)}$, $j=1,\ldots,n^* q_{r,\kappa}$,
is the sum of a deterministic number (that depends on $\pi^{(k)}$) of inter-arrival times to the dispatcher.
We denote such number by
$a_{j,r,\kappa}^{(k)}$.
Thus,
\begin{equation}
\label{eq:dist_Aj}
A_{j,\kappa,r}^{(k)} =_{st} \sum_{n=1}^{a_{j,r,\kappa}^{(k)}} T_{n}^{(k)},
\end{equation}
where $=_{st}$ denotes equality in distribution.

If $\pi^{(k)}\in \J_p^{(k)}$ for some $p\in \mathbb{Z}_+^R$, then we notice that $a_{j,r,\kappa}^{(k)}$ does not vary with $\kappa$ because by symmetry the arrival processes of all queues of a given type are equal, in distribution, up to a shift in time.
In this case, the arrival process at \emph{any} queue of type $r$ becomes a sequence composed of a repeated pattern
of $p_r$ inter-arrival times that we can write as
\begin{equation}
\label{eq:pattern}
A_{1,\kappa,r}^{(k)},A_{2,\kappa,r}^{(k)}, \ldots, A_{p_r,\kappa,r}^{(k)}.
\end{equation}
Therefore, when $\pi^{(k)}\in \J_p^{(k)}$, we will just write $a_{j,r}^{(k)}$ instead of $a_{j,r,\kappa}^{(k)}$ and it is also clear that
\begin{equation}
\label{eq:prop_a_jrk}
\sum_{j=1}^{p_r} a_{j,r}^{(k)}=k\|p\|
\end{equation}
and that 
\begin{equation}
\label{eq:exp_Tn}
\E T_{n,\kappa,r}^{(k)}
=\tfrac{1}{p_r}\sum_{j=1}^{p_r} a_{j,r}^{(k)} \, \E T_{n}^{(k)}
=\tfrac{1}{p_r}\sum_{j=1}^{p_r} \tfrac{a_{j,r}^{(k)}}{k\lambda}
=\tfrac{\|p\|}{p_r\lambda}.
\end{equation}

Now, we want to make $(T_{n,\kappa,r}^{(k)})_{n\in\mathbb{N}}$ stationary.
This can be done as follows by randomizing over the first inter-arrival time of queue $(r,\kappa)$.
Now, let us consider the auxiliary random variables $U_{r,\kappa}$, for all $r$ and $\kappa$, which we assume independent each other and of any other random variable and having a uniform distribution in $[0,1]$.
Then, we take
\begin{equation}
\label{eq:def_T1}
T_{1,r,\kappa}^{(k)} \bydef \sum_{j=1}^{n^* q_{r,\kappa}} A_{j,r,\kappa}^{(k)} \mathbf{1}_{\left\{ U_{r,\kappa}\in \left[\frac{j-1}{n^* q_{r,\kappa}},\frac{j}{n^* q_{r,\kappa}}\right]\right\}}
\end{equation}
Therefore, if $T_{1,r,\kappa}^{(k)}=A_{j,r,\kappa}^{(k)}$, for some $j<n^* q_{r,\kappa}$, then $T_{2,r,\kappa}^{(k)}=A_{j+1,r,\kappa}^{(k)}$ and so forth according to the pattern~\eqref{eq:pattern_general}.
Defined in this manner, one can see that $(T_{n,r,\kappa}^{(k)})_{n\in\mathbb{N}}$ is stationary as desired.
At this point, the issue is the following.
Consider two queues, say $(r,\kappa)$ and $(r',\kappa')$,
and suppose that
$T_{1,r,\kappa}^{(k)} = A_{j,r,\kappa}^{(k)}$
and 
$T_{1,r',\kappa'}^{(k)} = A_{j',r',\kappa'}^{(k)}$.
We should check whether policy $\pi^{(k)}$ is actually able to induce arrival processes at the queues
equal (samplepath-wise) to the ones built above through \eqref{eq:def_T1}.
One can easily see that this can be done with a possible shift of time for the arrival process at the queues and possibly discarding a finite number of jobs. This is allowed because these operations do not change the stationary behavior.


In the remainder, we will use stochastic orderings. 
We will denote by $\le_{st}$, $\le_{cx}$ and $\le_{icx}$, the \emph{usual stochastic order}, the \emph{convex order} and the \emph{increasing convex order}, respectively; we point to, e.g., \cite{shantikumar,stoyan1983comparison} for their definition.

We will also refer to the following lemma, which can be easily proven.

\begin{lemma}
\label{lm:ddard}
Let $N$ be a finite positive integer and suppose $(f_{k,n} )_{(k,n)\in \mathbb{N}\times \{1,...,N\}}$ is a semi-infinite array of numbers such that for some constant $c$, $\lim_{k\to\infty} f_{k,n} = c$, for $n\in \{1, \ldots, N \}$. Then,
for any sequence $(n_k )_{k\in\mathbb{N}}$ with values in $\{1, \ldots, N \}$, $\lim_{k\to\infty} f_{k,n_k} = c$.
\end{lemma}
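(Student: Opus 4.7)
The plan is to reduce the statement to the standard $\varepsilon$-$K$ characterization of convergence, exploiting in a crucial way the fact that the second index ranges over the \emph{finite} set $\{1,\ldots,N\}$. Fix an arbitrary $\varepsilon>0$. For each $n\in\{1,\ldots,N\}$, the hypothesis $\lim_{k\to\infty} f_{k,n}=c$ supplies an index $K_n=K_n(\varepsilon)\in\mathbb{N}$ such that $|f_{k,n}-c|<\varepsilon$ for all $k\ge K_n$.

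Next I would set $K\bydef \max\{K_1,\ldots,K_N\}$, which is well defined and finite precisely because $N<\infty$ (this is the only place finiteness of $N$ is used, but it is essential: without it one could take a diagonal sequence $n_k=k$ and construct counterexamples). Then for every $k\ge K$ and every choice of $n_k\in\{1,\ldots,N\}$, we have $k\ge K\ge K_{n_k}$, so by the defining property of $K_{n_k}$,
\begin{equation*}
|f_{k,n_k}-c|<\varepsilon.
\end{equation*}
Since $\varepsilon>0$ was arbitrary, this yields $\lim_{k\to\infty} f_{k,n_k}=c$, which is the desired conclusion.

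There is really no obstacle here: the lemma is a one-line application of the pigeonhole idea that a finite maximum of finite thresholds is itself a finite threshold. The only subtlety worth flagging in the write-up is the remark that the argument breaks down when $N=\infty$, which justifies stating the lemma with the hypothesis that $N$ is a finite positive integer; this is the form in which the lemma will later be invoked, typically with $N=p_r$ or $N=\|p\|$ when extracting convergence along the (bounded) index of the periodic pattern.
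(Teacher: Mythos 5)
Your proof is correct and is exactly the standard argument the paper has in mind (the paper simply asserts the lemma ``can be easily proven'' and omits the proof): take $K=\max_n K_n(\varepsilon)$ over the finite index set and observe $k\ge K\ge K_{n_k}$ forces $|f_{k,n_k}-c|<\varepsilon$. Your remark that finiteness of $N$ is essential, and that the lemma is later invoked with the second index ranging over a bounded set such as $\{1,\dots,\mathrm{lcm}(p)\}$, is accurate as well.
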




We now give proofs for our results, i.e., Theorems~\ref{th0},~\ref{th05} and~\ref{th1}.

\subsection{Proof of Theorem~\ref{th0}}


Let random variables $V_1$ and $V_2$ be given such that
$V_1 =_{st} V$
and
$V_2=V_1-1$ if $V_1>1$ otherwise $V_2=n^*$.
We prove \eqref{eq:st_W} through Strassen's theorem building a coupling $(\tilde{W}_{f(n)}^{(k)}(\pi),\tilde{W}_{f(n+1)}^{(k)}(\pi))$ of $W_{f(n)}^{(k)}(\pi)$ and $W_{f(n+1)}^{(k)}(\pi)$ through $V_1$ and $V_2$ ensuring that $\tilde{W}_{f(n)}^{(k)}(\pi)\le \tilde{W}_{f(n+1)}^{(k)}(\pi)$.
This is done as follows. First, let $\tilde{W}_{f(n)}^{(k)}(\pi)$ and $\tilde{W}_{f(n+1)}^{(k)}(\pi)$ be the Loynes waiting times (see \cite[p.~501]{Loynes}) obtained when the first queue to serve a job is given by the outcome of $V_1$ and $V_2$, respectively;
thus, $\tilde{W}_{f(n)}^{(k)}(\pi)$ is the waiting time at time $0$ with $n'$ jobs in the past at queue $(r,\kappa)$, provided that $f(n)=(n',r,\kappa)$.
Then, we let the (Loynes) waiting times be driven by the same realizations of the random inter-arrival and service times.

%

We now prove \eqref{eq:wc_W}.
Since $\pi$ is periodic with period $n^*$,
we first observe that 
$\pi$ returns the same queue along subsequence $(n \,n^* + i)_{n\in \mathbb{N}}$, for all $i=1,\ldots,n^*$.
Similarly, also the second and third component of $f(n \,n^* + i)$ do not change along these subsequences, though they are not known in advance because they depend on the outcome of random variable $V$, see \eqref{eq:dec_n_nhat}.
Thus, for all $i=n^*+1,\ldots,2n^*$, by construction we have
\begin{equation}
\Pr( f_2(n \,n^* + i)=r, f_3(n \,n^* + i)=\kappa)
=\Pr( f_2(i)=r, f_3(i)=\kappa )
=q_{r,\kappa},
\end{equation}
and we get
\begin{subequations}
\begin{align}
&\lim_{n\to\infty} \Pr(W_{f(n n^* +i)}^{(k)}(\pi) \le t)\\
\label{eq:th0_1}
& = \lim_{n\to\infty} \sum_{r,\kappa}  q_{r,\kappa}   \Pr(W_{n,r,\kappa}^{(k)}(\pi) \le t \,|\, (f_2(i),f_3(i))=(r,\kappa))\\
\label{eq:th0_3}
& = \sum_{r,\kappa} q_{r,\kappa} \Pr(W_{r,\kappa}^{(k)}(\pi) \le t)\\
\label{eq:th0_4}
& = \Pr(W^{(k)}(\pi) \le t).
\end{align}
\end{subequations}
In \eqref{eq:th0_1}, we have conditioned on $f(i)$.
In \eqref{eq:th0_3}, we have used that $W_{n,r,\kappa}^{(k)}(\pi)$ converges in distribution to $W_{r,\kappa}^{(k)}(\pi)$; see \cite{Loynes}.
In \eqref{eq:th0_4}, we have used the definition of $W^{(k)}(\pi)$.
Now, since the limit in \eqref{eq:th0_4} does not depend on~$i$, the proof is concluded by applying Lemma~\ref{lm:ddard} once noted that $n^*$ is a finite positive integer.

\subsection{Proof of Theorem~\ref{th05}}

We first observe that we can prove this theorem under some assumption on the sequence $(\pi^{(k)})_{k\in\mathbb{N}}$.
Given $\pi^{(1)}\in\J_p^{(1)}$, we require that for all~$k$:
\begin{equation}
\label{eq:subseq_pik}
\pi_1^{(k)}(n) = \pi_1^{(1)}(n), \qquad \forall n.
\end{equation}
One may refer to these sequences as the `natural' scaling of policy $\pi^{(1)}$: in the two-tier interpretation of our policies, \eqref{eq:subseq_pik} means that the dispatcher at the first tier implements the same policy, to queue types, when $k$ grows.

These sequences will be assumed along this proof.
If this theorem holds for these sequences, then it also holds for all the sequences in view of Lemma~\ref{lm:ddard} and of the fact that the cardinality of $\J_p^{(k)}$ does not vary with~$k$.

For $m\in\mathbb{N}$, let
\begin{equation}
\label{eq:k_m}
k_m\bydef m\, \mbox{lcm}(p), 
\end{equation}
where lcm$(p)$ denotes the least common multiple of $p_1,\ldots,p_R$. 
The subsequences $(k_m+i)_{m\in\mathbb{N}}$, for all $i=1,\ldots,\mbox{lcm}(p)$, play a key role in our proof of Theorem~\ref{th1}.
Along these subsequences, next fact holds true and follows by construction of the policies in set $\J_p^{(k)}$: it is a direct consequence of the fact that queues of the same type are visited in a round-robin manner.
\begin{fact}
\label{f1}
For $m>1$, $j=1,\ldots,p_r$ and $i\in\mathbb{N}$, 
%
$a_{j,r}^{(p_rm+i)} =a_{j,r}^{( p_r(m-1)+i)} + \|p\|$.
\end{fact}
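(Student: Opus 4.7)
The plan is to express $a_{j,r}^{(k)}$ in closed form as a function of $k$ and then verify the claimed increment directly. By the first defining property of $\J_p^{(k)}$, combined with the reduction \eqref{eq:subseq_pik}, the tier-1 sequence $\pi_1$ does not depend on $k$; let $n_{r,1} < \cdots < n_{r, p_r}$ denote the fixed positions in $\{1, \ldots, \|p\|\}$ at which $\pi_1$ equals~$r$. For any type-$r$ job index $n \geq 1$, write $n = (\lambda - 1) p_r + \ell$ with $\ell \in \{1, \ldots, p_r\}$ and $\lambda \geq 1$; the dispatcher position of the $n$-th type-$r$ job is then
\begin{equation*}
\mathrm{pos}_r(n) = (\lambda - 1)\|p\| + n_{r, \ell},
\end{equation*}
independently of $k$. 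By the second defining property of $\J_p^{(k)}$, queue $(r,1)$ receives the type-$r$ jobs numbered $(j-1)k + 1$, for $j = 1, \ldots, p_r$, within each super-period of $k\|p\|$ dispatcher jobs. Hence
\begin{equation*}
a_{j,r}^{(k)} = \mathrm{pos}_r(jk + 1) - \mathrm{pos}_r((j-1)k + 1), \quad j = 1, \ldots, p_r - 1,
\end{equation*}
and the wrap-around term is
\begin{equation*}
a_{p_r, r}^{(k)} = k\|p\| + \mathrm{pos}_r(1) - \mathrm{pos}_r((p_r - 1)k + 1).
\end{equation*}

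The key observation is that shifting the type-$r$ job index by $p_r$ shifts the dispatcher position by exactly $\|p\|$: $\mathrm{pos}_r(n + p_r) = \mathrm{pos}_r(n) + \|p\|$, because the within-period index $\ell$ is unchanged while the cycle index $\lambda$ advances by one. Applied to $k \mapsto k + p_r$, this yields $\mathrm{pos}_r((j-1)(k + p_r) + 1) = \mathrm{pos}_r((j-1)k + 1) + (j-1)\|p\|$ (since $(j-1)(k + p_r) + 1 = (j-1)k + 1 + (j-1)p_r$), and analogously for $jk + 1$. Substituting in the formula for $a_{j,r}^{(k)}$ with $j \leq p_r - 1$, the net increment is $j\|p\| - (j-1)\|p\| = \|p\|$. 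For the wrap-around case $j = p_r$, the leading $k\|p\|$ becomes $(k + p_r)\|p\|$, the term $\mathrm{pos}_r(1)$ is $k$-independent, and the lower endpoint shifts by $(p_r - 1)\|p\|$, so the total increment is again $p_r\|p\| - (p_r - 1)\|p\| = \|p\|$. Since $p_r m + i = (p_r(m-1) + i) + p_r$, this establishes Fact~\ref{f1}.

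I do not anticipate any substantive obstacle: the argument is purely combinatorial bookkeeping on the round-robin structure of $\J_p^{(k)}$, enabled by the fact that $\pi_1$ is held fixed as $k$ varies. The only point requiring care is the wrap-around case $j = p_r$, where one must correctly update the leading term $k\|p\|$ and observe that $\mathrm{pos}_r(1)$ does not depend on~$k$.
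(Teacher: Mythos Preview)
Your proof is correct. Both your argument and the paper's rest on the same core observation: because $\pi_1$ has period $\|p\|$ with exactly $p_r$ type-$r$ slots per period, an increment of $p_r$ in the type-$r$ job counter corresponds to an increment of exactly $\|p\|$ in the dispatcher counter. The paper expresses this as a one-line counting argument (between the second arrival to queue $(r,1)$ in the two systems there must be $p_r$ additional type-$r$ jobs, hence $\|p\|$ additional dispatcher arrivals), whereas you make it explicit by introducing the position function $\mathrm{pos}_r$ and the identity $\mathrm{pos}_r(n+p_r)=\mathrm{pos}_r(n)+\|p\|$, then differencing the closed form for $a_{j,r}^{(k)}$. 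Your route has the advantage of handling all $j$ (including the wrap-around $j=p_r$) uniformly and transparently, at the cost of a bit more setup; the paper's version is terser but leaves the general-$j$ case to ``periodicity'' without detail. One small cosmetic point: you use $\lambda$ for the cycle index, which clashes with the paper's use of $\lambda$ for the arrival rate; rename it before splicing.
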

\begin{proof}
By construction, we have $\pi_2^{(p_rm+i)}(V)=1$ and  $\pi^{(p_rm+i)}(V)=\pi^{(p_r(m-1)+i)}(V)$, which in some sense couples the arrival processes at queues of the $(p_rm+i)$-th and $(p_r(m-1)+i)$-th systems. 

Without loss of generality, let us assume that $(r,1)$ is the queue that receives the first job.

Now, Fact \ref{f1} holds true because $(p_rm+i) - (p_r(m-1)+i) = p_r$ jobs must be sent to some queues of type $r$ of the $(p_rm+i)$-th system in the time interval $[V+ a_{j,r}^{( p_r(m-1)+i)}, V+a_{j,r}^{( p_r m+i)}-1]$, for $j=1$, and the number of arrivals at the dispatcher in that interval is exactly $\|p\|$ by construction of the policies in $\J_p^{(k)}$ for any $k$ (see subsection~\ref{sec:per_pol})

This argument applies to all the other queues because we have considered periodic policies.
\end{proof}

We show Fact \ref{f1} in the following example, to help understanding its meaning and proof.

\begin{example}
Assume $R=2$, $p=(3,2)$, $r=1$, $i=1$, $m=2$, $j=1$.
Assume also that $\pi\in\J_p^{(1)}$ is such that $(\pi_1(V+n))_{n=0,\ldots,4}=(1,1,2,1,2)$.
Then, the sequence of queues to be visited for both systems $p_rm+i$ and $p_r(m-1)+i$ is given in Table \ref{table1}, where we can see that the decomposition in Fact~\ref{f1} holds.

\begin{table}[h]
\label{table1}
\begin{tabular}{c@{ }c@{ }c@{ }c@{ }c@{ }c@{ }c@{ }c@{ }c@{ }c@{ }c@{ }c@{ }c@{ }}
\multicolumn{1}{c}{} & \multicolumn{11}{c}{$ a_{j,r}^{(p_rm+i)}$}\\
\multicolumn{1}{c}{} & \multicolumn{11}{c}{\downbracefill}\\[-1ex]
& & & & & & & & & & &  \\
$p_rm+i$: & (1,1) & (1,2) & (2,1) & (1,3) & (2,2) & (1,4) & (1,5) & (2,3) & (1,6) & (2,4) & (1,7) & (1,1) \ldots  \\
\multicolumn{7}{c}{} & \multicolumn{5}{c}{\upbracefill}\\[-1ex]
\multicolumn{7}{c}{} & \multicolumn{5}{c}{$ \|p\| $}\\
& & & & & & & & & & &  \\
$p_r(m-1)+i$: & (1,1) & (1,2) & (2,1) & (1,3) & (2,2) & (1,4) & (1,1) & (2,3) & (1,2) & (2,4) & (1,3) & (1,4) \ldots  \\
\multicolumn{1}{c}{} & \multicolumn{6}{c}{\upbracefill}\\[-1ex]
  \multicolumn{1}{c}{} & \multicolumn{6}{c}{$ a_{j,r}^{( p_r(m-1)+i)}$}\\
\end{tabular}
\caption{Illustrative example for the decomposition in Fact~\ref{f1}.}
\end{table}

\end{example}

Since Fact~\ref{f1} holds for any $m>1$, we can make the replacement $m\to m \tfrac{\mbox{lcm}(p)}{p_r}$ for which we obtain
\begin{equation}
a_{j,r}^{(k_m +i)} =a_{j,r}^{( k_m - p_r + i)} + \|p\|.
\end{equation}
Unfolding this recursion, for $m\in\mathbb{N}$, $i=1,\ldots,\mbox{{lcm}}(p)$, we get
\begin{equation}
\label{eq:sim_ordered0}
a_{j,r}^{(k_{m}+i)} 
=a_{j,r}^{( k_m - 2p_r + i)} + 2\|p\|
=\cdots
= a_{j,r}^{(k_{m-1}+i)} +  \tfrac{\mbox{{lcm}}(p)}{p_r}\|p\|  = a_{j,r}^{(i)} + m  \tfrac{\|p\|}{p_r}\mbox{{lcm}}(p)
\end{equation}
and therefore
\begin{equation}
\label{eq:sim_ordered}
\frac{a_{j,r}^{(k_m+i)}}{k_m+i}
= \frac{a_{j,r}^{(i)} + m  \tfrac{\|p\|}{p_r}\mbox{{lcm}}(p)}{ m\cdot \mbox{lcm}(p)+i}\xrightarrow[m\rightarrow\infty]{} \frac{\|p\|}{p_r}.
\end{equation}

As a technical observation, in \eqref{eq:sim_ordered0} we note why we require index $i$ to range in $\{1,\ldots,\mbox{{lcm}}(p)\}$: it `closes' the recursion.

Since \eqref{eq:sim_ordered} holds for all $i=1,\ldots,\mbox{lcm}(p)$,  by using Lemma~\ref{lm:ddard} we obtain
\begin{equation}
\label{eq:lim_a_jkr2}
\lim\limits_{k\rightarrow \infty} \tfrac{a_{j,r}^{(k)}}{k}=\tfrac{\|p\|}{p_r}.
\end{equation}

As a comment, we note here that \eqref{eq:lim_a_jkr2} could be proven without Fact~\ref{f1} and what has followed.
However, we stress that we will need Fact~\ref{f1} later anyhow, as it will play a crucial role in the proof of our main result Theorem~\ref{th1} (see Lemma~\ref{lm:TSL}.iii).

\begin{lemma}
\label{lm:TSL05}
Under the hypotheses of Theorem~\ref{th05},
$T_{n,r,\kappa}^{(k)} \to \tfrac{\|p\|}{\lambda p_r}$ in probability, as $k\to\infty$.
\end{lemma}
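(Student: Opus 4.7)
\textbf{Plan for the proof of Lemma~\ref{lm:TSL05}.}
Since \eqref{eq:exp_Tn} already gives $\E T^{(k)}_{n,r,\kappa}=\|p\|/(\lambda p_r)$ \emph{exactly}, the plan is to establish $L^2$-convergence of $T^{(k)}_{n,r,\kappa}$ to $\|p\|/(\lambda p_r)$, which is strictly stronger than the convergence in probability claimed. Everything boils down to two ingredients already in hand: the block-size limit $a^{(k)}_{j,r}/k\to\|p\|/p_r$ from \eqref{eq:lim_a_jkr2} and the variance bound $\Var\, T^{(k)}_1=o(1/k)$ assumed in Case~\ref{as0}.

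First I would unpack the law of $T^{(k)}_{n,r,\kappa}$ under the stationary construction. By \eqref{eq:def_T1} and the cyclic propagation described right after it, $T^{(k)}_{n,r,\kappa}$ is equal in distribution to $A^{(k)}_{J,\kappa,r}$, where $J$ is uniform on $\{1,\dots,p_r\}$ and independent of the dispatcher inter-arrival times; conditionally on $J=j$, \eqref{eq:dist_Aj} yields that $T^{(k)}_{n,r,\kappa}$ has the law of $\sum_{n'=1}^{a^{(k)}_{j,r}}T^{(k)}_{n'}$, a sum of $a^{(k)}_{j,r}$ i.i.d.\ copies of the dispatcher inter-arrival time. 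Applying the law of total variance by conditioning on $J$ then produces two contributions. The within-$J$ piece equals
$$\E\bigl[\Var(T^{(k)}_{n,r,\kappa}\mid J)\bigr]=\tfrac{1}{p_r}\sum_{j=1}^{p_r} a^{(k)}_{j,r}\,\Var\, T^{(k)}_1=\tfrac{k\|p\|}{p_r}\,\Var\, T^{(k)}_1,$$
where the last equality uses \eqref{eq:prop_a_jrk}; this is $O(k)\cdot o(1/k)=o(1)$ by Case~\ref{as0}. The between-$J$ piece equals $\Var\bigl(\E[T^{(k)}_{n,r,\kappa}\mid J]\bigr)=\lambda^{-2}\Var\bigl(a^{(k)}_{J,r}/k\bigr)$, i.e.\ the variance of the discrete uniform distribution on the $p_r$ points $\{a^{(k)}_{j,r}/k:j=1,\dots,p_r\}$; since $p_r$ is fixed (independent of $k$) and each of these points tends to the common limit $\|p\|/p_r$ by \eqref{eq:lim_a_jkr2}, this piece also vanishes as $k\to\infty$. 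Adding the two pieces yields $\Var\, T^{(k)}_{n,r,\kappa}\to 0$, hence $L^2$-convergence, hence convergence in probability.

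I do not anticipate a real obstacle: the content is just the combination of \eqref{eq:lim_a_jkr2} with the Case~\ref{as0} variance bound, and both are already in place. The only point deserving some care is the bookkeeping of the stationary randomization that turns $T^{(k)}_{n,r,\kappa}$ into a uniform mixture over $A^{(k)}_{1,\kappa,r},\dots,A^{(k)}_{p_r,\kappa,r}$, and this is directly read off from \eqref{eq:def_T1} together with the computation $n^\ast q_{r,\kappa}=p_r$ that holds for every $\pi^{(k)}\in\J_p^{(k)}$.
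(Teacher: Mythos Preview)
Your proposal is correct and follows essentially the same route as the paper: both compute $\Var\,T^{(k)}_{n,r,\kappa}$ via the law of total variance (the paper conditions on $U_{r,\kappa}$, you on the induced index $J$, which is the same thing), split into the within-block piece $\tfrac{1}{p_r}\sum_j a^{(k)}_{j,r}\Var\,T^{(k)}_1\to 0$ and the between-block piece governed by $a^{(k)}_{j,r}/k\to\|p\|/p_r$, and conclude via Chebyshev/$L^2$-convergence. Your use of \eqref{eq:prop_a_jrk} to collapse the within-block sum to $\tfrac{k\|p\|}{p_r}\Var\,T^{(k)}_1$ is a minor simplification the paper does not make explicit.
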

\begin{proof}
For all $\epsilon>0$,
\begin{subequations}
\begin{align}
\Pr \Big(|T_{n,r,\kappa}^{(k)} - \tfrac{\|p\|}{\lambda p_r}| \ge \epsilon\Big) & = \Pr \Big(|T_{n,r,\kappa}^{(k)} - \E T_{n,r,\kappa}^{(k)}| \ge \epsilon\Big) \\
\label{TSL05_1}
&\le  \frac{1}{\epsilon^2} \Var\, T_{n,r,\kappa}^{(k)}\\
\label{TSL05_2}
&=  \frac{1}{\epsilon^2} \left(  \E (\Var\, T_{n,r,\kappa}^{(k)} | U_{r,\kappa}) +  \Var\, \E( T_{n,r,\kappa}^{(k)} | U_{r,\kappa})  \right) \\
\label{TSL05_3}
&=  \frac{1}{\epsilon^2} \left(  \frac{1}{p_r} \sum_{j=1}^{p_r} \Var\, A_{j,r,\kappa}^{(k)}  +   (\E A_{j,r,\kappa}^{(k)}-\E T_{n,r,\kappa}^{(k)})^2 \right) \\
\label{TSL05_4}
&=  \frac{1}{\epsilon^2} \left(  \frac{1}{p_r} \sum_{j=1}^{p_r} a_{j,r}^{(k)} \Var\, T_{1}^{(k)}  +   \Big(\tfrac{a_{j,r}^{(k)}}{\lambda k}-\tfrac{\|p\|}{\lambda p_r}\Big)^2 \right) \\
\label{TSL05_5}
& \xrightarrow[k\rightarrow\infty]{} 0.
\end{align}
\end{subequations}
In \eqref{TSL05_1}, \eqref{TSL05_2} and \eqref{TSL05_4}, we have used Chebyshev's inequality, the law of total variance and that the $T_{n}^{(k)}$'s are i.i.d., respectively.
In \eqref{TSL05_5}, we have used \eqref{eq:lim_a_jkr2} and that $a_{j,r}^{(k)} \Var\, T_{1}^{(k)}\to 0$ because $\Var\, T_{1}^{(k)}=o(k)$ and $a_{j,r}^{(k)}\le k \|p\|$.
\end{proof}

Since $T_{n,r,\kappa}^{(k)}$ converges in probability for each $n$, also the finite dimensional distributions of the process 
$(T_{n,r,\kappa}^{(k)})_{n\in\mathbb{N}}$ converge to the one of the constant process with rate $\tfrac{\lambda p_r}{\|p\|}$.
Together with the fact that $\E T_{n,r,\kappa}^{(k)}=\lim_{k\to\infty} \E T_{n,r,\kappa}^{(k)}=\tfrac{\|p\|}{\lambda p_r}$, we can use the continuity of the stationary waiting time (see \cite[Theorem~22]{borovkov1976stochastic}) to establish that
\begin{equation}
\label{eq:lim_Dist_Wk_pqsa_rk}
W_{r,\kappa}^{(k)}(\pi^{(k)}) \xrightarrow[k\rightarrow\infty]{d} W_{r} (p).
\end{equation}
Using \eqref{eq:lim_Dist_Wk_pqsa_rk} and that Cesaro sums converge if each addend converges, we obtain
\begin{subequations}
\begin{align}
\lim_{k\to\infty} \Pr(W^{(k)}(\pi^{(k)}) \le t)
& = \lim_{k\to\infty} \sum_{r=1}^R \frac{p_r}{\|p\|k} \sum_{\kappa=1}^k \Pr(W_{r,\kappa}^{(k)}(\pi^{(k)}) \le t)\\
& = \sum_{r=1}^R \frac{p_r}{\|p\|} \Pr(W_{r} (p) \le t)
=  \Pr(W (p) \le t)
\end{align}
as desired.
\end{subequations}

\subsection{Proof of Theorem~\ref{th1}}

\emph{Proof of~\eqref{eq:MLB} and~\eqref{eq:lim_E_Wk_pqsa}}.
Given that $\J_p^{(k)}\subseteq \overline{\A}_p(k)$, \eqref{eq:MLB} and~\eqref{eq:lim_E_Wk_pqsa} hold true if
\begin{equation}
\label{eq:dis1}
\E W (p) \le
\E W^{(k)}(\pi)
\end{equation}
for all $\pi\in \overline{\A}_p(k)$ and
\begin{equation}
\label{eq:dis2}
\lim_{k\to\infty} \E W^{(k)}(\pi^{(k)}) \le \E W (p).
\end{equation}

Let $\E W_{r,\kappa}(x)$ be the mean waiting time of a D/GI/1 queue with arrival rate $\lambda x$ and i.i.d. service times having the same distribution of $S_{1,r,\kappa}$.
Inequality~\eqref{eq:dis1} is a fairly direct application of known results:
for all $\pi\in \overline{\A}_p(k)$,
\begin{subequations}
\label{eq:dis1_proved}
\begin{align}
\E W^{(k)}(\pi) &= \sum_{r,\kappa} q_{r,\kappa} \E W_{r,\kappa}^{(k)}(\pi)\\
\label{eq:fr_humblet} &\ge  \sum_{r,\kappa} q_{r,\kappa} \E W_{r,\kappa}(k q_{r,\kappa})\\
\label{eq:fr_convex} &\ge  \sum_{r,\kappa} \frac{p_r}{k\|p\|} \E W_{r,\kappa}(\tfrac{p_r}{\|p\|})\\
&=  \sum_r \frac{p_r}{\|p\|} \E W_{r}(p) =\E W (p).
\end{align}
\end{subequations}
In \eqref{eq:fr_humblet}, we have used the lower bound in \cite{humblet1982determinism}.
In \eqref{eq:fr_convex}, we have used Karamata's inequality once noticing that
i) $\E W_{r,\kappa}(x)=\E W_{r,1}(x)$,
ii) the majorization
$(\frac{p_r}{\|p\|},\ldots,\frac{p_r}{\|p\|}) \prec  (k q_{r,1},\ldots,k q_{r,k})$ holds, and
iii) the mean waiting time of a D/GI/1 queue is convex increasing in the arrival rate (see, e.g., \cite[Theorem~5]{NeelyM05}, \cite{gun90}), which means that $q_{r,\kappa} \E W_{r,\kappa}(k q_{r,\kappa})$ is convex in $q_{r,\kappa}$.

We now prove \eqref{eq:dis2}. As in the proof of Theorem~\ref{th05}, this can be done assuming that \eqref{eq:subseq_pik} holds.
Thus, the sequences \eqref{eq:subseq_pik} will be assumed along this proof. 

The remainder of the proof basically works as follows. 
First, we bound the waiting times of our G/GI/1 queues through the waiting times of suitable GI/GI/1 queues.
Second, we show that the sequence of such waiting times converges in distribution to $W(p)$.
Then, we show that the waiting times of such GI/GI/1 queues are non-increasing in the $\le_{icx}$-sense along the sequences $k_m+i$, for all $i=1,\ldots,\mbox{lcm}(p)$, which allows us to conclude that the sequence is uniformly integrable; this is the point where we will use Fact~\ref{f1}.
Finally, we use \cite[Theorem 3.5, pp. 31]{Bill} to conclude that also the sequence of the expected values converges to $\E W(p)$.

Associated to each queue of type~$r$, we define an auxiliary random variable, $\overline{T}_{r}^{(k)}$, such that
\begin{equation}
\label{eq:Tbar_min}
\overline{T}_{r}^{(k)}=_{st} \sum_{n=1}^{\min\limits_{j=1,\ldots,p_r} a_{j,r}^{(k)}} T_{n}^{(k)}.
\end{equation}

Next lemma provides properties satisfied by~$\overline{T}_{r}^{(k)}$ that will be used later.
We recall that $k_m=m \, \mbox{lcm}(p)$, see~\eqref{eq:k_m}.

\begin{lemma}
\label{lm:TSL}
Under the hypotheses of Theorem~\ref{th1}, the following properties hold:
\begin{itemize}
 \item[i)]
We have
\begin{equation}
\label{eq:lim_a_jkr}
\lim\limits_{k\rightarrow \infty} \E \overline{T}_{r}^{(k)} =
\lim\limits_{k\rightarrow \infty} \min_{j=1,\ldots,p_r}\tfrac{a_{j,r}^{(k)}}{\lambda k}=\tfrac{\|p\|}{\lambda p_r}.
\end{equation}


\item[ii)] $\overline{T}_{r}^{(k)} \to \tfrac{\|p\|}{\lambda p_r}$ in probability, as $k\to\infty$.

\item[iii)] For all $i=1,\ldots,\mbox{\emph{lcm}}(p)$,
\begin{equation}
\label{eq:icx_min_T}
-\overline{T}_{r}^{(k_{m+1}+i)} \le_{icx} - \overline{T}_{r}^{(k_{m}+i)}.
\end{equation}
\end{itemize}
\end{lemma}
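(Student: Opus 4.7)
My plan is as follows. Parts (i) and (ii) are routine variants of the computation in Lemma~\ref{lm:TSL05}. For (i), \eqref{eq:Tbar_min} and linearity of expectation (the number of summands is deterministic) give $\E \overline{T}_r^{(k)} = \min_j a_{j,r}^{(k)}/(\lambda k)$, and the convergence to $\|p\|/(\lambda p_r)$ then follows from \eqref{eq:lim_a_jkr2}, since the minimum over the finite set $\{1,\ldots,p_r\}$ of sequences sharing a common limit converges to that limit. For (ii), Chebyshev's inequality together with the variance bound $\Var\,\overline{T}_r^{(k)} = \min_j a_{j,r}^{(k)}\cdot \Var\,T_1^{(k)} \le k\|p\|\cdot o(1/k) = o(1)$ (valid in Cases~\ref{as1} and~\ref{as2}) and the convergence of the mean established in (i) yields convergence in probability to the constant $\|p\|/(\lambda p_r)$.

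The substantive content is in (iii). Recall from \eqref{eq:sim_ordered0} that $N_{m+1} := \min_j a_{j,r}^{(k_{m+1}+i)}$ satisfies $N_{m+1} = N_m + c$ with $c := (\|p\|/p_r)\,\mbox{lcm}(p)$. The key step is to factor
\[
\overline{T}_r^{(k_m+i)} \;=_{st}\; \beta_m\, \overline{U}_m, \qquad
\beta_m \bydef \frac{N_m}{k_m+i}, \qquad \overline{U}_m \bydef \frac{1}{N_m}\sum_{n=1}^{N_m} (k_m+i)\,T_n^{(k_m+i)}.
\]
In Cases~\ref{as1} and~\ref{as2}, the variables $(k_m+i)\,T_n^{(k_m+i)}$ are distributed either as $\Exp(\lambda)$ or as the constant $1/\lambda$, a law that does not depend on $m$, so $\overline{U}_m$ is the sample mean of $N_m$ i.i.d.\ copies drawn from a \emph{fixed} distribution. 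This representation separates the effect of increasing $m$ into two ingredients: a deterministic prefactor $\beta_m$ and an averaging over a growing sample size.

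Two monotonicities then close the argument. First, using $\sum_j a_{j,r}^{(i)} = i\|p\|$ from \eqref{eq:prop_a_jrk}, hence $N_0 := \min_j a_{j,r}^{(i)} \le i\|p\|/p_r$, an elementary calculation shows that $\beta_m = (N_0 + mc)/(m\,\mbox{lcm}(p) + i)$ is non-decreasing in $m$. Second, by the classical fact that sample means of i.i.d.\ random variables are non-increasing in the convex order, $\overline{U}_{m+1} \le_{cx} \overline{U}_m$. Combining these, for any decreasing convex $g$,
\[
\E g(\beta_{m+1}\,\overline{U}_{m+1}) \;\le\; \E g(\beta_{m+1}\,\overline{U}_m) \;\le\; \E g(\beta_m\,\overline{U}_m),
\]
where the first inequality uses that $x\mapsto g(\beta_{m+1}x)$ is convex together with the convex-order comparison, and the second uses that $g$ is decreasing, $\beta_{m+1}\ge\beta_m$ and $\overline{U}_m\ge 0$. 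Writing $g(y) = f(-y)$ with $f$ increasing convex, this is exactly \eqref{eq:icx_min_T}.

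The real obstacle is finding this multiplicative factorization: it is what disentangles the two effects of increasing $m$ (a growing mean and an extra averaging) so that each can be controlled separately — the former by monotonicity of the prefactor $\beta_m$, and the latter by the classical convex-order monotonicity of the sample mean. Fact~\ref{f1} is what makes the factorization clean, since it delivers the recursion $N_{m+1} = N_m + c$ in which the number of summands grows by the \emph{same} deterministic increment at each step.
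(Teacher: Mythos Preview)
Your proof is correct. Parts (i) and (ii) are essentially the paper's arguments, with (ii) slightly streamlined: the paper first passes to the subsequences $(k_m+i)_m$ to pin down a fixed argmin $j_i^*$ and then applies Chebyshev, whereas you apply Chebyshev directly using $\Var\,\overline{T}_r^{(k)}\to 0$ and part (i).

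For (iii), your route genuinely differs from the paper's. The paper uses Makowski's characterization $X\le_{icx}Y \iff \exists\,Z:\ X\le_{st}Z\le_{cx}Y$: it constructs an intermediate $Z$ as a sum of $a_{j_i^*,r}^{(k_m+i)}(1+\mbox{lcm}(p)/(k_m+i))$ copies of $T_n^{(k_{m+1}+i)}$, gets $-\overline{T}_r^{(k_{m+1}+i)}\le_{st}-Z$ from the fact that $\overline{T}_r^{(k_{m+1}+i)}$ has at least as many summands, and then shows $Z\le_{cx}\overline{T}_r^{(k_m+i)}$ by checking equality of means and, in Case~\ref{as1}, comparing variances of two Erlang laws (trivial in Case~\ref{as2}). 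Your multiplicative factorization $\overline{T}_r^{(k_m+i)}=_{st}\beta_m\overline{U}_m$ instead separates the two effects of increasing $m$ cleanly: the growth of the mean is absorbed into the deterministic prefactor $\beta_m$ (whose monotonicity you verify from $N_0\le i\|p\|/p_r$), and the extra averaging is handled by the classical fact that i.i.d.\ sample means decrease in convex order. This avoids both the explicit intermediate $Z$ and the Erlang-specific variance computation, and makes transparent why Cases~\ref{as1} and~\ref{as2} are singled out: they are exactly the cases in which $kT_n^{(k)}$ has a law independent of $k$, so that $\overline{U}_m$ is a bona fide sample mean from a fixed distribution. Both approaches, however, rest on Fact~\ref{f1} via \eqref{eq:sim_ordered0}, which supplies the recursion $N_{m+1}=N_m+c$.
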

\begin{proof}
\emph{Proof of i)}.
This is an immediate consequence of \eqref{eq:lim_a_jkr2}.

\ \\
\emph{Proof of ii)}.
For all $i=1,\ldots,\mbox{lcm}(p)$,
let $j_i^*\in \arg\min_{j=1,\ldots,p_r} \frac{a_{j,r}^{(i)}}{i}$.
Then, from \eqref{eq:sim_ordered}, we get
\begin{equation}
\label{eq:argmin_dnc}
j_i^*\in \arg\min_{j=1,\ldots,p_r} \frac{a_{j,r}^{(k_{m}+i)}}{k_{m}+i}, \quad \forall m>1.
\end{equation}

In view of Lemma~\ref{lm:ddard}, the convergence in ii) holds if we can show that
$\overline{T}_{r}^{(k_m+i)} \xrightarrow[m\rightarrow\infty]{\Pr} \tfrac{\|p\|}{\lambda p_r}$,  for all $i=1,\ldots,\mbox{lcm}(p)$.
Given \eqref{eq:argmin_dnc}, this amounts to show that
\begin{equation}
\sum_{n=1}^{a_{j_i^*,r}^{(k_m+i)}} T_{n}^{(k_m+i)} \xrightarrow[m\rightarrow\infty]{\Pr} \tfrac{\|p\|}{\lambda p_r}, \quad \forall i=1,\ldots,\mbox{lcm}(p).
\end{equation}
We prove the former by showing (the stronger statement) that 
$\sum_{n=1}^{a_{j,r}^{(k)}} T_{n}^{(k)} \xrightarrow[k\rightarrow\infty]{\Pr} \tfrac{\|p\|}{\lambda p_r}$, for all $j$.
Now, using that
$\{| X-c | >2\epsilon\} 
\subseteq
\{| X-\E X |>\epsilon \} \cup
\{|\E X-c|>\epsilon \}$
for a random variable $X$, we have 
\begin{subequations}
\nonumber
\begin{align}
\Pr \Big(|\sum_{n=1}^{a_{j,r}^{(k)}} T_{n}^{(k)}-\tfrac{\|p\|}{p_r\lambda} | \ge \epsilon\Big) \le &
\Pr \Big(|\sum_{n=1}^{a_{j,r}^{(k)}} T_{n}^{(k)}-\E \sum_{n=1}^{a_{j,r}^{(k)}} T_{n}^{(k)} | \ge \epsilon\Big) + \\
&\Pr \Big(| \tfrac{\|p\|}{p_r\lambda} -\E \sum_{n=1}^{a_{j,r}^{(k)}} T_{n}^{(k)} | \ge \epsilon\Big). 
\end{align}
\end{subequations}
The second term in the right-hand side of former inequality tends to zero as $k\to\infty$ by \eqref{eq:lim_a_jkr2}.
The following shows that also the first term goes to zero:
\begin{subequations}
\begin{align}
\label{eq:cP_cc1}
\Pr \Big(|\sum_{n=1}^{a_{j,r}^{(k)}} T_{n}^{(k)}-\E \sum_{n=1}^{a_{j,r}^{(k)}} T_{n}^{(k)} | \ge \epsilon\Big)
& \le \frac{1}{\epsilon^2} \sum_{n=1}^{a_{j,r}^{(k)}} \Var \, T_{n}^{(k)}  \\
%
%
\label{eq:cP_cc2}
& \le \frac{1}{\epsilon^2} a_{j,r}^{(k)} o(1/k)
\xrightarrow[k\rightarrow\infty]{} 0.
\end{align}
\end{subequations}
In \eqref{eq:cP_cc1}, we have used Chebyshev's inequality and that the $T_{n}^{(k)}$'s are independent.
In \eqref{eq:cP_cc2}, we have used that
$a_{j,r}^{(k)}\le k \|p\|$.

\ \\
\emph{Proof of iii)}.
We use that $X\le_{icx} Y$ if and only if there exists an other random variable $Z$ such that $X\le_{st} Z$ and $Z\le_{cx}Y$ \cite{AMM83}.

Using \eqref{eq:sim_ordered0} and that $\frac{a_{j_i^*,r}^{(k_{m}+i)}}{(k_{m}+i)}\le\tfrac{\|p\|}{p_r}$ for all $m$ (by \eqref{eq:sim_ordered}), the first observation is that
\begin{equation}
a_{j_i^*,r}^{(k_{m+1}+i)} \ge a_{j_i^*,r}^{(k_{m}+i)} +\mbox{lcm}(p)\frac{a_{j_i^*,r}^{(k_{m}+i)}}{k_{m}+i},
\end{equation}
where $j_i^*$ is defined above in point i).
Thus, we have
\begin{subequations}
\begin{align}
- \overline{T}_{r}^{(k_{m+1}+i)}&
=_{st} -  \sum_{n=1}^{a_{j_i^*,r}^{(k_{m+1}+i)}} T_{n}^{(k_{m+1}+i)} \\
&\le_{st} -  \sum_{n=1}^{a_{j_i^*,r}^{(k_{m}+i)} +\mbox{lcm}(p)\frac{a_{j_i^*,r}^{(k_{m}+i)}}{k_{m}+i}} T_{n}^{(k_{m+1}+i)}\bydef - Z.
\end{align}
\end{subequations}
Now, it remains to show that  $ - Z \le_{cx} - \overline{T}_{r}^{(k_{m}+i)}$, which is equivalent to show that
\begin{equation}
\label{Z_cx_T}
Z \le_{cx} \overline{T}_{r}^{(k_{m}+i)},
\end{equation}
see \cite[Theorem~3.A.12]{shantikumar}.
Since
\begin{equation}
\E Z = \frac{1}{\lambda} \frac{a_{j_i^*,r}^{(k_{m}+i)} +\mbox{lcm}(p)\frac{a_{j_i^*,r}^{(k_{m}+i)}}{k_{m}+i}}{k_{m}+i+\mbox{lcm}(p)}
= \frac{1}{\lambda} \frac{a_{j_i^*,r}^{(k_{m}+i)}}{k_{m}+i}
= \E \overline{T}_{r}^{(k_{m}+i)},
\end{equation}
\eqref{Z_cx_T} holds trivially under Case~\ref{as2}.
Now, let Case~\ref{as1} hold.
Noticing that both $\overline{T}_{r}^{(k_{m}+i)}$ and $Z$ have Erlang distributions with the same mean,
to prove \eqref{Z_cx_T} is enough to show that $\Var\,Z \le \Var\, \overline{T}_{r}^{(k_{m}+i)}$; see \cite[p.~14]{stoyan1983comparison}.
We have
\begin{subequations}
\begin{align}
\lambda^2 \,\Var\,Z
& =  \frac{a_{j_i^*,r}^{(k_{m}+i)} +\mbox{lcm}(p)\frac{a_{j_i^*,r}^{(k_{m}+i)}}{k_{m}+i}}{(k_{m+1}+i)^2} \\
& =  \frac{a_{j_i^*,r}^{(k_{m}+i)}}{(k_m+i)^2} \frac{(k_m+i)(k_{m}+i + \mbox{lcm}(p))}{(k_{m}+i+\mbox{lcm}(p))^2} \\
& \le \frac{a_{j_i^*,r}^{(k_{m}+i)}}{(k_{m}+i)^2} = \lambda^2 \,\Var\, \overline{T}_{r}^{(k_{m}+i)}
\end{align}
\end{subequations}
as desired.
\end{proof}


We now present an argument that allows us to uniformly bound the second moment of $W_{r,\kappa}^{(k)}$.

Let $\delta_r \bydef \tfrac{1}{2}\left( \tfrac{\|p\|}{p_r\lambda}-\tfrac{1}{\mu_r} \right)$ and
\begin{equation}
\label{eq:k_star}
k^*\bydef \min\left\{k>0: 0\le \tfrac{\|p\|}{p_r\lambda} - \min_{j=1,\ldots,p_r} \tfrac{a_{j,r}^{(k')}}{k'\lambda} \le \delta_r,\quad \forall k'\ge k \right\}.
\end{equation}

\begin{lemma}
\label{lm:kstar_inf}
$k^*<\infty$.
\end{lemma}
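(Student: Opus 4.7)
The plan is straightforward: show that the difference $\tfrac{\|p\|}{p_r\lambda} - \min_{j} \tfrac{a_{j,r}^{(k)}}{k\lambda}$ is automatically nonnegative and tends to zero, then use $\delta_r>0$ to guarantee that it eventually falls below $\delta_r$ uniformly over the finitely many queue types.

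First I would verify nonnegativity. From the identity $\sum_{j=1}^{p_r} a_{j,r}^{(k)} = k\|p\|$ in~\eqref{eq:prop_a_jrk}, the arithmetic mean of the quantities $a_{j,r}^{(k)}/k$ equals $\|p\|/p_r$, so
\begin{equation}
\min_{j=1,\ldots,p_r} \tfrac{a_{j,r}^{(k)}}{k\lambda} \le \tfrac{1}{\lambda p_r}\sum_{j=1}^{p_r} \tfrac{a_{j,r}^{(k)}}{k} = \tfrac{\|p\|}{p_r\lambda},
\end{equation}
which gives the left inequality in the definition of $k^*$ for every $k>0$.

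Next I would invoke Lemma~\ref{lm:TSL}.i, which states that $\min_{j} \tfrac{a_{j,r}^{(k)}}{\lambda k} \to \tfrac{\|p\|}{\lambda p_r}$ as $k \to \infty$. The stability hypothesis $\lambda\tfrac{p_r}{\|p\|}<\mu_r$ of Theorem~\ref{th1} forces $\tfrac{\|p\|}{p_r\lambda} > \tfrac{1}{\mu_r}$, so $\delta_r>0$ strictly. Therefore, for each $r$ there exists a finite threshold $k^*_r$ such that the right inequality $\tfrac{\|p\|}{p_r\lambda} - \min_{j} \tfrac{a_{j,r}^{(k')}}{k'\lambda} \le \delta_r$ holds for all $k'\ge k^*_r$.

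Finally, since there are only finitely many queue types, setting $k^* \bydef \max_{r=1,\ldots,R} k^*_r$ yields a finite value for which both inequalities in~\eqref{eq:k_star} hold for all $k'\ge k^*$ and all $r$, establishing $k^*<\infty$. There is no subtle obstacle here: everything is a direct consequence of~\eqref{eq:prop_a_jrk}, Lemma~\ref{lm:TSL}.i and the strict stability of the limiting D/GI/1 queues.
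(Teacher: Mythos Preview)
Your proposal is correct and follows essentially the same approach as the paper's proof: both argue that the nonnegativity is automatic from \eqref{eq:prop_a_jrk}, that $\delta_r>0$ follows from the stability hypothesis, and that the limit in Lemma~\ref{lm:TSL}.i forces the difference to eventually drop below $\delta_r$. You supply slightly more detail (the explicit arithmetic-mean justification and the $\max_r k_r^*$ step), but the substance is identical.
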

\begin{proof}
This is immediate because
$\delta_r>0$ by hypothesis, 
$\lim\limits_{k'\to\infty} \min\limits_{j=1,\ldots,p_r} \tfrac{a_{j,r}^{(k')}}{k'\lambda} = \tfrac{\|p\|}{p_r\lambda}$ (see \eqref{eq:lim_a_jkr}), and
$\tfrac{\|p\|}{p_r\lambda} \ge \min\limits_{j=1,\ldots,p_r} \tfrac{a_{j,r}^{(k)}}{k\lambda}$ for all $k$.
\end{proof}


Let $\overline{W}_r^{(k)}$ denote the stationary waiting time of a GI/GI/1 queue
with (i.i.d.) interarrival times $(\overline{T}_{n,r}^{(k)})_{n\in\mathbb{N}}$ where $\overline{T}_{n,r}^{(k)}=_{st} \overline{T}_{r}^{(k)}$ (see \eqref{eq:Tbar_min}) and service times $(S_{n,\kappa,r}^{(k)})_{n\in\mathbb{N}}$.
By coupling the $\overline{T}_{n,r}^{(k)}$'s and the $T_{n,\kappa,r}^{(k)}$'s in the obvious manner,
one can easily see that
$(\overline{T}_{1,r}^{(k)},\ldots,\overline{T}_{n,r}^{(k)}) \le (T_{1,\kappa,r}^{(k)},\ldots,T_{n,\kappa,r}^{(k)})$\footnote{Given $x,y\in\mathbb{R}^d$, here $x\le y$ means $x_i\le y_i$ for all $i=1,\ldots,d$.}
and therefore we have $(\overline{T}_{1,r}^{(k)},\ldots,\overline{T}_{n,r}^{(k)}) \le_{st} (T_{1,\kappa,r}^{(k)},\ldots,T_{n,\kappa,r}^{(k)})$.
Using, e.g., \cite[pp.~217,~220]{baccelli2003elements}, this implies
\begin{equation}
\label{eq:Wst_copm}
W_{r,\kappa}^{(k)} \le_{st} \overline{W}_r^{(k)}.
\end{equation}
Furthermore, given that
$-\overline{T}_{n,r}^{(k_{m+1}+i)}\le_{icx} -\overline{T}_{n,r}^{(k_m+i)}$ for all $i=1,\ldots,\mbox{lcm}(p)$ (by Lemma~\ref{lm:TSL}) and that the $(\overline{T}_{n,r}^{(k)})_{n\in\mathbb{N}}$ are independent, we can use \cite[p.~337]{Asmussen} to establish that 
\begin{equation}
\label{eq:Wst_copm2}
\overline{W}_r^{(k_{m+1}+i)}\le_{icx} \overline{W}_r^{(k_{m}+i)},
\end{equation}
for all $i=1,\ldots,\mbox{lcm}(p)$.
Therefore, given $\E \left( (W_{r,\kappa}^{(k)})^2 \right)<\infty$ for all $k$ and Lemma \ref{lm:kstar_inf}, we can uniformly bound the second moment of $W_{r,\kappa}^{(k)}$ as follows
\begin{subequations}
\label{eq:UI}
\begin{align}
\label{eq:UI1}
\sup\limits_{k\ge k^*} \E \left( (W_{r,\kappa}^{(k)})^2 \right)
& \le \sup\limits_{k\ge k^*} \E \left( (\overline{W}_r^{(k)})^2 \right)\\
\label{eq:UI2}
&= \max\limits_{i=1,\ldots,\mbox{lcm}(p)} \sup\limits_{m:k_m+i\ge k^*} \E \left( (\overline{W}_r^{(k_m+i)})^2 \right)\\
\label{eq:UI3}
&= \max\limits_{i=1,\ldots,\mbox{lcm}(p)} \E \left( (\overline{W}_r^{(k_{m_i^*}+i)})^2 \right) \\
\label{eq:UI4}
& <\infty,
\end{align}
\end{subequations}
where $m_i^*\bydef \min\{m:k_m+i\ge k^*\}$.
In \eqref{eq:UI1} and \eqref{eq:UI3}, we have used \eqref{eq:Wst_copm} and \eqref{eq:Wst_copm2}, respectively.
In \eqref{eq:UI4}, we have used that
\begin{equation}
\E \overline{T}_{n,r}^{(k)}
= \min\limits_{j=1,\ldots,p_r} \tfrac{a_{j,r}^{(k)}}{k\lambda}
\ge \tfrac{\|p\|}{p_r\lambda} -\delta_r
= \tfrac{1}{2} \tfrac{\|p\|}{p_r\lambda}+\tfrac{1}{2}\tfrac{1}{\mu_r}
> \tfrac{1}{\mu_r}, \quad \forall k\ge k^*,
\end{equation}
i.e. the ergodicity condition, and that the third moment of service times is finite, which imply that the second moment of $\overline{W}_r^{(k)}$ is finite \cite[pg.~270]{Asmussen}.

Now, using the continuity of the stationary waiting time of GI/GI/1 queues \cite[Corollary~X.6.4]{Asmussen} and part i) and ii) of Lemma~\ref{lm:TSL}, we have
$
\overline{W}_r^{(k)}\xrightarrow[k\rightarrow\infty]{d}  W_r (p),
$
and given the uniform integrability \eqref{eq:UI} we have that also the expected values converge \cite[Theorem 3.5, pp. 31]{Bill}, i.e.,
\begin{equation}
\label{eq:exp_values_covnerge}
\lim_{k\to\infty} \E \overline{W}_r^{(k)} = \E W_r (p).
\end{equation}
With the above relations, we can conclude the proof of \eqref{eq:dis2}
\begin{subequations}
\begin{align}
\lim_{k\to\infty} \E W^{(k)}(\pi^{(k)})= & \lim_{k\to\infty} \sum_{r} \sum_{\kappa} \frac{p_r}{\|p\|k} \,\E W_{r,\kappa}^{(k)}(\pi^{(k)})\\
\le & \lim_{k\to\infty} \sum_{r} \frac{p_r}{\|p\|} \,\E \overline{W}_{r}^{(k)}\\
\label{eq:using_lemmacomparisonW}
= &\sum_r \frac{p_r}{\|p\|} \, \E W_r(p). 
\end{align}
\end{subequations}

\ \\
\emph{Proof of~\eqref{eq:lim_Var_Wk_pqsa}}.
Let $Q$ be a discrete random variable with values in $\{1,\ldots,R\}\times \{1,\ldots,k\}$ such that 
$\Pr(Q=(r,\kappa))=\tfrac{p_r}{\|p\| k}$. We assume that this random variable is independent of any other random variable.
By definition of $W^{(k)}(\pi^{(k)})$ and using the law of total variance, we obtain
\begin{subequations}
\label{eq:var_sss}
\begin{align}
\Var\, W^{(k)}(\pi^{(k)}) & = \E (\Var\, W^{(k)}(\pi^{(k)}) | Q) +  \Var\, \E( W^{(k)}(\pi^{(k)}) | Q) \\
& = \sum_{r,\kappa} \tfrac{p_r}{\|p\| k}   \left( \Var\, W_{r,\kappa}^{(k)}(\pi^{(k)}) +   \left( \E W_{r,\kappa}^{(k)}(\pi^{(k)}) - \E W^{(k)}(\pi^{(k)}) \right)^2 \right).
\end{align}
\end{subequations}
When $k\to\infty$, we have already established that
$\E W^{(k)}(\pi^{(k)})\to \E W(p)$
and that
$\E W_r(p)\le \E W_{r,\kappa}^{(k)}(\pi^{(k)})\le \E \overline{W}_{r,\kappa}^{(k)} \to \E W_r(p)$.
Therefore, it only remains to show that the second moment of $W_{r,\kappa}^{(k)}(\pi^{(k)})$ converge to $\E [W_r(p)^2]$.
This is done by using the same argument above for the convergence of the first moment.
Hence, using the continuity of the waiting time and of the square function \cite[Corollary~X.6.4]{Asmussen} and part i) and ii) of Lemma~\ref{lm:TSL}, we obtain
$
(\overline{W}_r^{(k)})^2\xrightarrow[]{d}  W_r (p)^2
$, as $k\rightarrow\infty$.
Furthermore, the second moment of $(\overline{W}_r^{(k)})^2$ is finite because the fifth moment of the service times is finite \cite[pg.~270]{Asmussen} and
\eqref{eq:Wst_copm2} ensures that the sequence $(\overline{W}_r^{(k)})^2$ is uniformly integrable because it is non-increasing along subsequences $(k_m+i)_{m\in\mathbb{N}}$, for all $i=1,\ldots,\mbox{lcm}(p)$.
Thus, $\E[(\overline{W}_r^{(k)})^2] \to \E[W_r(p)^2]$.
Together with \eqref{eq:icx_W_r},
as desired we obtain
\begin{equation}
\lim_{k\to\infty} \E [(W_{r,\kappa}^{(k)}(\pi^{(k)}))^2] =  \E[W_r(p)^2].
\end{equation}

\section{Conclusions}
\label{sec:concl}

We have derived structural properties concerning a known problem in the literature of stochastic scheduling, that is Problem~\ref{prob1}.
Fixing the proportion of jobs to send on each queue, $p$,
we have identified a class of periodic policies and have proven that all the policies in this class are asymptotically equivalent and optimal.
The limiting mean waiting time achieved by these policies, $\E W(p)$ (see \eqref{eq:W_p_ascs}), is expressed in terms of a linear combination of independent D/GI/1 queues and
has the convenient property of being convex in~$p$.
We believe that these structural properties provide researchers and practitioners with new means
about the considered problem.
For instance, one consequence of these results is that the problem of computing the optimal proportions of jobs to send to each queue, which is considered a difficult problem (see the introduction), boils down, asymptotically, to the solution of an optimization problem of the form:
$$\min \E W(p)\quad \mbox{s.t.:} \quad p\in\mathcal{S},$$ for $\mathcal{S}$ compact and convex,
and we stress that $\E W(p)$ is a convex function of~$p$.
Using a classic result in convex optimization, this means that a polynomial number of evaluations of the objective function $\E W(p)$ are sufficient to converge to an optimizer of the problem.
Given that each objective evaluation is efficient \cite{NeelyM05,SX97,Squillante1999,Borst1995,ComBox95,BS83}, this lets us conclude that we have significantly reduced much of the difficulty of Problem~\ref{prob1}.
In the case where service times have an exponential distribution, $\E W(p)$ admits a very simple characterization because it is the weighted mean waiting time of $R$ D/M/1 queues~\cite{bhat,Anselmi2011}.



\bibliographystyle{abbrv}


\end{document}